\newtheorem{theorem}{Theorem}
\newtheorem{corollary}{Corollary}
\newtheorem{lemma}{Lemma}
\newtheorem{proposition}{Proposition}
\newenvironment{proof}[1][Proof]{\textbf{#1.} }{\ \rule{0.5em}{0.5em}}
\begin{document}

\title[]{Non-solvable contractions of semisimple Lie algebras in low dimension }

\author{R. Campoamor-Stursberg\dag}

\address{\dag\ Dpto. Geometr\'{\i}a y Topolog\'{\i}a\\Fac. CC. Matem\'aticas\\
Universidad Complutense de Madrid\\Plaza de Ciencias, 3\\E-28040
Madrid, Spain}

\ead{rutwig@mat.ucm.es}

\begin{abstract}
The problem of non-solvable contractions of Lie algebras is
analyzed. By means of a stability theorem, the problem is shown to
be deeply related to the embeddings among semisimple Lie algebras
and the resulting branching rules for representations. With this
procedure, we determine all deformations of indecomposable Lie
algebras having a nontrivial Levi decomposition onto semisimple
Lie algebras of dimension $n\leq 8$, and obtain the non-solvable
contractions of the latter class of algebras.

\end{abstract}

\pacs{02.20Sv, 02.20Qs}

\maketitle


\newpage

\section{Introduction}

Contractions of Lie algebras have played a major role in physical
applications, starting from the pioneering work of Segal and
In\"on\"u and Wigner \cite{Seg,IW} up to the many generalizations
of the contraction notion developed over the decades \cite{Sa}.
Early in the development of the theory of contractions, its
relation to a somewhat inverse procedure, that of deformations of
Lie algebras, was recognized and developed in \cite{LN}, and
tested for consistency in the case of three dimensional algebras.
An important consequence of this work was the fact that the Lie
algebras contracting onto a given Lie algebra $\frak{g}$ had to be
searched among the deformations of the latter, thus establishing
the invertibility of contractions.\footnote{Later it was pointed
out that not every deformation is associated to a contraction.}
The introduction of further techniques like the cohomology of Lie
algebras \cite{Ri} allowed one to interpret contractions
geometrically in the variety of Lie algebras having a fixed
dimension. Once the most important groups intervening in
applications were analyzed, like the Lie algebras in the classical
and quantum relativistic kinematics, the attention of various
authors was turned to obtain complete diagrams of contractions in
low dimension \cite{Con}, which have been enlarged and completed
in order to cover all the special types of contractions considered
earlier \cite{Po}. Such lists have been obtained up to dimension 4
over the field of real numbers. In this approach, the analysis
depends essentially on a reliable classification of real Lie
algebras, which only exists up to dimension six. For higher
dimensions, only partial results have been obtained, and the
absence of a classification of solvable non-nilpotent algebras
constitutes an important obstruction in studying contractions for
any fixed dimension.

\medskip

In this work, we approach the contraction problem from another
point of view. Instead of fixing the dimension, we focus on the
structure of the contracting Lie algebras. To this extent, we
choose the semisimple Lie algebras up to dimension 8, and
determine the non-solvable contractions. It turns out that the
Levi decomposition and the embedding problem of semisimple Lie
algebras, as well as the branching rules of representations, play
a prominent role in this analysis. Actually, Levi subalgebras of
Lie algebras have a certain stability property that allows one to
control, up to some extent, how the deformations and contractions
behave \cite{Ri}. Using the reversibility of contractions, we
determine the deformations of low dimensional Lie algebras
$\frak{g}$ having a nontrivial Levi decomposition, i.e., such that
they decompose into $\frak{g}\overrightarrow{\oplus}_{R}\frak{r}$
with $\frak{s}\neq 0$ semisimple, $\frak{r}\neq 0$ the radical and
$R$ a nontrivial representation of the semisimple part acting by
derivations on the radical. In particular, we determine which
deformations lead to a semisimple Lie algebra, and obtain the
corresponding contraction. For decomposable contractions, i.e.,
algebras decomposing as direct sum of ideals, we find that they
exist whenever none of the ideals is semisimple. This will imply
that reductive algebras can only appear as contractions of
decomposable algebras.
\medskip

Unless otherwise stated, any Lie algebra $\frak{g}$ considered in
this work is defined over the field $\mathbb{R}$ of real numbers.
We convene that nonwritten brackets are either zero or obtained by
antisymmetry. We also use the Einstein summation convention.
Abelian Lie algebras of dimension $n$ will be denoted by the
symbol $nL_{1}$.

\section{Contractions, deformations and cohomology of Lie
algebras}

From the geometrical point of view, a Lie algebra
$\frak{g}=(V,\mu)$ is a pair formed by a vector space $V$ and a
bilinear alternating (i.e., skew-symmetric) tensor $\mu:V\times
V\rightarrow V$ that satisfies the Jacobi identity. For any fixed
basis of $V$, the coordinates of this tensor are identified with
the structure constants $C_{ij}^{k}$ of $\frak{g}$. In this sense,
the set of real Lie algebra laws $\mu$ over $V$ forms a variety
$\mathcal{L}^{n}$ embedded in $\mathbb{R}^{\frac{n^3-n^2}{2}}$
\cite{Ni}. The coordinates of a point correspond to the structure
tensor of an algebra $\frak{g}$. Since the general linear group
acts naturally on this variety, the orbits $\mathcal{O}(\frak{g})$
of a point $\frak{g}$ (i.e., a Lie algebra) are formed by all Lie
algebras isomorphic to $\frak{g}$. Deformations of Lie algebras
arise from the problem of studying the properties of these orbits.
This leads one to analyze neighborhoods of a given Lie algebra in
the variety, as well as the intersection of orbits corresponding
to different Lie algebras. Of special interest are the so called
stable Lie algebras, which are those for which the orbit
$\mathcal{O}(\frak{g})$ is an open set \cite{Ri}. One of the main
tools in this analysis is the adjoint cohomology of Lie algebras
\cite{Ni}.

Recall that an n-cochain $\varphi$ of a Lie algebra
$\frak{g}=(V,\mu=[. , .])$ is a multilinear antisymmetric map
$\varphi:V\times.^{n}.\times V\rightarrow M$, where $M$ is a
$\frak{g}$-module. For the special case $M=V$, we get the vector
space $C^{n}(V,V)$ of $n$-cochains with values in the adjoint
module.\footnote{By the identification of $\frak{g}$ with the pair
$(V,\mu)$, we can further suppose that the Lie bracket $[.,.]$ is
given by $[X,Y]=\mu(X,Y)$ for all $X,Y\in V$.} By means of the
coboundary operator
\begin{eqnarray}
d\varphi(X_{1},..,X_{n+1})=\sum_{i=1}^{n+1}(-1)^{i+1}\left[
X_{i},\varphi(X_{1},..,\widehat{X}_{i},..,X_{n+1})\right]+\nonumber \\
\sum_{1\leq i,j\leq
n+1}(-1)^{i+j}\varphi\left(\left[X_{i},X_{j}\right],X_{1},..,\widehat{X}_{i},..,\widehat{X}_{j},..X_{n+1}\right)
\label{KRO}
\end{eqnarray}
we obtain a cochain complex $\left\{d:C^{n}(V,V)\rightarrow
C^{n+1}(V,V),\quad n\geq 0\right\}$. In particular, $d\circ d=0$
holds. We call $\varphi\in C^{n}(V,V)$ a $n$-cocycle if
$d\varphi=0$, and a $n$-coboundary if there exists $\sigma\in
C^{n-1}(V,V)$ such that $d\sigma=\varphi$. The spaces of cocycles
and coboundaries are denoted by $Z^{n}(V,V)$, respectively
$B^{n}(V,V)$. By (\ref{KRO}), we have the inclusion relation
$B^{n}(V,V)\subset Z^{n}(V,V)$ for all $n$, and the quotient space
\begin{equation}
H^{n}(V,V)=Z^{n}(V,V)/ B^{n}(V,V)
\end{equation}
is called $n$-cohomology space of $\frak{g}$ for the adjoint
representation \cite{Se}. Among the many applications of these
spaces \cite{Ri,Vi,Az2,Ch}, they are relevant for the study of
orbits in the following sense. A formal one-parameter deformation
$\frak{g}_{t}$ of a Lie algebra $\frak{g}=(V,\left[.,.\right])$ is
given by a deformed commutator:
\begin{equation}
\left[X,Y\right]_{t}:=\left[X,Y\right]+\psi_{m}(X,Y)t^{m},\label{DK}
\end{equation}
where $t$ is a parameter and $\psi_{m}:V\times V\rightarrow V$ is
a skew-symmetric bilinear map. Imposing that these formal brackets
satisfy the Jacobi identity (up to quadratic order of $t$), one
obtains the following expression:
\begin{eqnarray}
\fl
\left[X_{i},\left[X_{j},X_{k}\right]_{t}\right]_{t}+\left[X_{k},\left[X_{i},X_{j}\right]_{t}\right]_{t}+
\left[X_{j},\left[X_{k},X_{i}\right]_{t}\right]_{t}\nonumber \\
\lo = t d\psi_{1}(X_{i},X_{j},X_{k})+
t^{2}\left(\frac{1}{2}\left[\psi_{1},\psi_{1}\right]+
d\psi_{2}\right)(X_{i},X_{j},X_{k})+ \mathcal{O}(t^{3}),
\label{JG}
\end{eqnarray}
where $d\psi_{l}$ is the trilinear map of (\ref{KRO}) for $n=2$
and $\left[\psi_{1},\psi_{1}\right]$ is defined by
\begin{equation}
\fl
\frac{1}{2}\left[\psi_{1},\psi_{1}\right](X_{i},X_{j},X_{k}):=\psi_{1}\left(\psi_{1}(X_{i},X_{j}),X_{k}\right)+
\psi_{1}\left(\psi_{1}(X_{j},X_{k}),X_{i}\right)+\psi_{1}\left(\psi_{1}(X_{k},X_{i}),X_{j}\right).
\end{equation}
If equation (\ref{JG}) equals zero, then we have the conditions
\begin{eqnarray}
d\psi_{1}(X_{i},X_{j},X_{k})=0, \label{K2}\\
\frac{1}{2}\left[\psi_{1},\psi_{1}\right](X_{i},X_{j},X_{k})+d\psi_{2}(X_{i},X_{j},X_{k})=0.
\label{K3}
\end{eqnarray}
Equation (\ref{K2}) shows that $\psi_{1}$ is a 2-cocycle in
$H^{2}(\frak{g},\frak{g})$, implying that deformations are
generated by 2-cocycles\footnote{By this we mean that the linear
term of the deformation is a cocycle.}. On the other hand,
equation (\ref{K3}) implies that the deformation satisfies a
so-called integrability condition. Additional conditions are
obtained if the deformed bracket is developed up to higher orders
of $t$ \cite{Ni,Az2}. In particular, if for some $\psi_{1}\in
Z^{2}(\frak{g},\frak{g})$ we have
$\left[\psi_{1},\psi_{1}\right]=0$, then the cocycle is called
integrable and the linear deformation $\frak{g}+t \psi_{1}$
defines a Lie algebra.

If the algebra $\frak{g}_{t}$ is isomorphic to $\frak{g}$, the
deformation $\frak{g}_{t}$ is called trivial. It is not difficult
to show that if this happens, then we can find a non-singular map
$f_{t}:V\rightarrow V$ such that
$f_{t}\left(\left[X,Y\right]_{t}\right)=\left[f_{t}X,f_{t}Y\right]$
for all $X,Y\in V$. This means that $\psi_{1}=df_{t}$, and the
cocycle is trivial (i.e., a coboundary). Therefore trivial
deformations are generated by 2-coboundaries \cite{Vi}. In this
framework, contractions and deformations of Lie algebras can be
related using trivial deformations \cite{Ch}.

\medskip

Classically, a contraction is defined as follows: Let $\frak{g}$
be  a Lie algebra and $\Phi_{t}\in Aut(\frak{g})$ a family of
non-singular linear maps of $\frak{g}$, where $t\in [1,\infty)$.
For any $X,Y\in\frak{g}$, the bracket over the transformed basis
has the form
\begin{equation}
\left[X,Y\right]_{\Phi_{t}}:=\Phi_{t}^{-1}\left[\Phi_{t}(X),\Phi_{t}(Y)\right].
\end{equation}
If the limit
\begin{equation}
\left[X,Y\right]_{\infty}:=\lim_{t\rightarrow
\infty}\Phi_{t}^{-1}\left[\Phi_{t}(X),\Phi_{t}(Y)\right]
\label{Ko}
\end{equation}
exists for any $X,Y\in\frak{g}$, then equation (\ref{Ko}) defines
a Lie algebra $\frak{g}^{\prime}$ called the contraction of
$\frak{g}$ (by $\Phi_{t}$), non-trivial if $\frak{g}$ and
$\frak{g}^{\prime}$ are non-isomorphic Lie algebras. Further, it
is not difficult to see that the infinitesimal version of equation
(\ref{Ko}) is generated by a coboundary \cite{Vi}. In fact, if we
consider a trivial cocycle $\psi\in B^{2}(\frak{g},\frak{g})$, let
$\sigma$ be the 1-cochain such that $d\sigma=\psi$. Using the
exponential map we obtain the linear transformation
$f_{t}=\exp(-t\sigma)$, and expressing the brackets over the
transformed basis $\left\{f_{t}(X_{i}\right\}$, we get
\begin{equation}
\left[X,Y\right]_{t}=f_{t}^{-1}\left[f_{t}(X),f_{t}(Y)\right].
\label{Ko1}
\end{equation}
Therefore a contraction can be obtained by taking limits in
(\ref{Ko1}). An important result states that for any contraction
of Lie algebras $\frak{g}\rightarrow \frak{g}^{\prime}$ there is a
deformation of $\frak{g}^{\prime}$ that reverses it \cite{LN}.
However, it should be remarked that a formal deformation is not
necessarily related to a contraction \cite{We,C63}.

A special case is given when $H^{2}(\frak{g},\frak{g})=0$. In this
situation, the Lie algebra $\frak{g}$ has no nontrivial
deformations, and, in particular, cannot arise as a contraction.
Such algebras are therefore stable. Although stable algebras with
nonvanishing cohomology exist \cite{Ni}, this condition implies
the stability of important classes of Lie algebras, such as
semisimple and parabolic Lie algebras \cite{To}.

\medskip

\section{Contractions and cohomology}

By the preceding results, contractions of Lie algebras can be
analyzed using cohomological tools. More specifically, the
deformations of Lie algebras are computed, and those being
invertible provide contractions \cite{Ch,C24}.

In general, the effective computation of the cohomology of Lie
algebras is a difficult task. However, for the case of Lie
algebras having a non-trivial Levi decomposition, there exists a
useful reduction, called the Hochschild-Serre spectral sequence
\cite{Se}. If $\frak{g}$ has the Levi decomposition
$\frak{g}=\frak{s}\overrightarrow{\oplus}_{R}\frak{r}$, where
$\frak{s}$ denotes the Levi subalgebra, $\frak{r}$ the radical of
$\frak{g}$ and $R$ a representation of $\frak{s}$ that acts by
derivations on the radical \cite{Tu}, then the adjoint cohomology
$H^{p}(\frak{g},\frak{g})$ admits the following decomposition:
\begin{equation}
H^{p}\left(  \frak{g},\frak{g}\right)  \simeq\sum_{i+j=p}H^{i}\left(  \frak{g}%
,\mathbb{R}\right)  \otimes H^{j}\left(  \frak{r},\frak{g}\right)
^{\frak{g}},
\end{equation}
where $H^{j}\left(  \frak{r},\frak{g}\right)  ^{\frak{g}}$ is the
space of $\frak{g}$-invariant cocycles. These are the multilinear
skew-symmetric maps $\varphi\in C^{j}(\frak{r},\frak{s})$ that
satisfy the coboundary operator (\ref{KRO}) and such that
\begin{eqnarray*}
\fl
(X\varphi)(Y_{1},..,Y_{j})=\left[X,\varphi(Y_{1},..,Y_{j})\right]
-\sum_{i=1}^{j}\varphi\left(\left[X,Y_{i}\right],Y_{1},,.\widehat{Y}_{i},..,Y_{j}\right)=0,\\
\lo \forall X\in\frak{s},\; Y_{1},..,Y_{j}\in \frak{r}.
\end{eqnarray*}
For the particular case $p=2$ the formula simplifies to
\begin{equation}
H^{2}\left(  \frak{g},\frak{g}\right)  \simeq H^{2}\left(
\frak{r},\frak{g}\right) ^{\frak{s}}. \label{HS}
\end{equation}
This result suggests that Levi subalgebras are stable in some
sense, and that deformations are determined by appropriate
modification of the brackets in the radical. This idea actually
constitutes an important theorem that will be used later.

\begin{proposition}
Let $\frak{g}=\frak{s\oplus r}$ be the direct sum of a semisimple
Lie algebra $\frak{s}$ and an arbitrary algebra $\frak{r}$. Then
$H^{2}\left( \frak{g},\frak{g}\right)  \simeq H^{2}\left(
\frak{r},\frak{r}\right)  $.
\end{proposition}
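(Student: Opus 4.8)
The plan is to exploit the splitting of the adjoint module induced by the direct sum structure, together with the vanishing of the low-degree cohomology of a semisimple Lie algebra (Whitehead's lemmas). Since the statement allows $\frak{r}$ to be an arbitrary algebra, $\frak{r}$ need not coincide with the radical of $\frak{g}$, so the reduction (\ref{HS}), which was established for a genuine Levi decomposition, cannot be invoked directly; a separate argument is required.

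First I would note that, because $\frak{g}=\frak{s}\oplus\frak{r}$ is a direct sum of ideals, one has $[\frak{s},\frak{r}]=0$, so both $\frak{s}$ and $\frak{r}$ are $\frak{g}$-submodules of the adjoint module. By additivity of adjoint cohomology in the coefficient module,
\[
H^{2}(\frak{g},\frak{g})\simeq H^{2}(\frak{g},\frak{s})\oplus H^{2}(\frak{g},\frak{r}).
\]
As a $\frak{g}$-module, $\frak{s}$ is the adjoint $\frak{s}$-module with $\frak{r}$ acting trivially, while $\frak{r}$ is the adjoint $\frak{r}$-module with $\frak{s}$ acting trivially; each is thus an external tensor product over the two summands. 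The K\"unneth formula for the cohomology of a direct sum of Lie algebras then gives
\[
H^{2}(\frak{g},\frak{s})\simeq\bigoplus_{i+j=2}H^{i}(\frak{s},\frak{s})\otimes H^{j}(\frak{r},\mathbb{R}),\qquad
H^{2}(\frak{g},\frak{r})\simeq\bigoplus_{i+j=2}H^{i}(\frak{s},\mathbb{R})\otimes H^{j}(\frak{r},\frak{r}).
\]

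Next I would invoke Whitehead's lemmas for the semisimple algebra $\frak{s}$: for any finite-dimensional $\frak{s}$-module $M$ one has $H^{1}(\frak{s},M)=H^{2}(\frak{s},M)=0$, while $H^{0}(\frak{s},\frak{s})=Z(\frak{s})=0$ and $H^{0}(\frak{s},\mathbb{R})=\mathbb{R}$. In the first sum every term carries a factor $H^{i}(\frak{s},\frak{s})$ with $i\in\{0,1,2\}$, all of which vanish, so $H^{2}(\frak{g},\frak{s})=0$. In the second sum the factors $H^{1}(\frak{s},\mathbb{R})$ and $H^{2}(\frak{s},\mathbb{R})$ vanish, leaving only the index $i=0,\ j=2$, whence $H^{2}(\frak{g},\frak{r})\simeq\mathbb{R}\otimes H^{2}(\frak{r},\frak{r})\simeq H^{2}(\frak{r},\frak{r})$. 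Combining the two summands yields the claimed isomorphism.

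The main obstacle is the rigorous justification of the K\"unneth (tensor) decomposition: one must verify that the Chevalley--Eilenberg cochain complex of $\frak{g}=\frak{s}\oplus\frak{r}$ with the above coefficients factors as the tensor product of the complexes of the two summands, which is precisely where the hypothesis $[\frak{s},\frak{r}]=0$ and the semisimplicity of $\frak{s}$ (ensuring the associated Hochschild--Serre spectral sequence collapses and no correction terms arise) enter decisively. Once this structural step is in place, the remainder is the bookkeeping of Whitehead's vanishing results.
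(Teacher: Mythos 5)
Your proof is correct, but it takes a genuinely different route from the paper's. The paper applies the Hochschild--Serre reduction (\ref{HS}) to get $H^{2}(\frak{g},\frak{g})\simeq H^{2}(\frak{r},\frak{g})^{\frak{s}}$ --- a step that is in fact legitimate here, since the reduction holds for any ideal $\frak{r}$ with semisimple quotient, not only for the radical, so the caution you raise at the outset is fair but does not actually block the paper's argument --- and then works pointwise with invariant cochains: writing $\varphi(Y,Z)=W_{1}+W_{2}$ with $W_{1}\in\frak{s}$, $W_{2}\in\frak{r}$, the invariance condition forces $[X,W_{1}]=0$ for all $X\in\frak{s}$, hence $W_{1}\in Z(\frak{s})=0$, so invariant cochains take values in $\frak{r}$ and the coboundary condition identifies the space with $H^{2}(\frak{r},\frak{r})$. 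You instead split the coefficient module $\frak{g}=\frak{s}\oplus\frak{r}$ (valid because $[\frak{s},\frak{r}]=0$ makes both summands $\frak{g}$-submodules), apply the K\"unneth decomposition for the Chevalley--Eilenberg complex of a direct sum with external coefficients, and eliminate all unwanted terms via Whitehead's lemmas together with $H^{0}(\frak{s},\frak{s})=Z(\frak{s})=0$. Your route is more structural: it avoids invariant cochains entirely, makes explicit exactly which vanishing results are used, and in particular establishes $H^{2}(\frak{g},\frak{s})=0$ as a separate clean statement, whereas the paper's computation buys brevity by leaning on the already-stated reduction (\ref{HS}). One caveat: your closing remark misplaces the role of semisimplicity. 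The factorization $C^{\bullet}(\frak{s}\oplus\frak{r},M_{1}\otimes M_{2})\simeq C^{\bullet}(\frak{s},M_{1})\otimes C^{\bullet}(\frak{r},M_{2})$ is purely formal, requiring only $[\frak{s},\frak{r}]=0$ and the external module structure, and over a field the K\"unneth theorem for tensor products of complexes needs no collapse or finiteness hypotheses; semisimplicity of $\frak{s}$ enters only afterwards, through the Whitehead vanishing. This misattribution does not affect the validity of your proof.
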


\begin{proof}
By the Hochschild-Serre spectral sequence, formula (\ref{HS})
holds. As an $\frak{r}$-module, the space $H^{2}\left(
\frak{r},\frak{g}\right) ^{\frak{g}}$ is trivial \cite{Se}, and
this implies that
\begin{equation*}
H^{2}\left(  \frak{r},\frak{g}\right)  ^{\frak{s}}\simeq
H^{2}\left( \frak{r},\frak{g}\right)  ^{\frak{g}}.
\end{equation*}
It suffices therefore to consider the $\frak{s}$-invariance. Now,
for any $\varphi\in$ $H^{2}\left(  \frak{r},\frak{g}\right)
^{\frak{s}}$ and $X\in\frak{s},\,Y,Z\in\frak{r}$ we have
\begin{equation}
\fl \left(  X\varphi\right)  \left(  Y,Z\right)  =\left[
X,\varphi\left( Y,Z\right)  \right]  -\varphi\left(  \left[
X,Y\right]  ,Z\right) -\varphi\left(  Y,\left[  X,Z\right] \right)
=\left[  X,\varphi\left( Y,Z\right)  \right]  =0.\label{IB1}
\end{equation}
because the sum is direct. Now $\varphi\left(  Y,Z\right)
\in\frak{g}$, and by the decomposition of $\frak{g}$ we can
rewrite it as
\begin{equation}
\varphi\left(  Y,Z\right)
=W_{1}+W_{2},\;W_{1}\in\frak{s},\,W_{2}\in \frak{r}.
\end{equation}
Since $\frak{s}$ is semisimple, for any $X$  there exists $X^{\prime}%
\in\frak{s}$ such that $\left[  X,X^{\prime}\right]  \neq0$. By
the invariance condition (\ref{IB1}) we must have $W_{1}=0$, thus
$\varphi\left(  Y,Z\right)  \in \frak{r}$ for all
$Y,Z\in\frak{r}$. This proves that any invariant cochain is
actually a 2-cochain of the radical, from which the assertion
follows by imposing the coboundary condition.
\end{proof}

\begin{corollary}
Let $\frak{g}$ be an indecomposable Lie algebra with non-trivial
Levi subalgebra $\frak{s}$. Then $\frak{g}$ cannot contract onto a
direct sum
$\frak{s}^{\prime}\oplus\frak{r}$ of a semisimple Lie algebra $\frak{s}%
^{\prime}$ with an arbitrary Lie algebra $\frak{r}$.
\end{corollary}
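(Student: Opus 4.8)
The plan is to argue by contradiction, exploiting the reversibility of contractions together with the Proposition just established. Suppose that $\frak{g}$ contracts onto $\frak{g}':=\frak{s}'\oplus\frak{r}$ with $\frak{s}'$ semisimple. By the reversibility result of \cite{LN} recalled above, every contraction $\frak{g}\rightarrow\frak{g}'$ is reversed by a deformation of $\frak{g}'$; hence $\frak{g}$ must arise, up to isomorphism, as a one-parameter deformation $\frak{g}'_{t}$ of the direct sum $\frak{g}'$. The whole argument then reduces to showing that no deformation of $\frak{s}'\oplus\frak{r}$ can destroy the direct-sum splitting, so that every such $\frak{g}'_{t}$ stays decomposable, contradicting the indecomposability of $\frak{g}$.

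First I would apply the Proposition to $\frak{g}'$, obtaining the isomorphism $H^{2}(\frak{g}',\frak{g}')\simeq H^{2}(\frak{r},\frak{r})$. Since deformations are generated by $2$-cocycles, as shown by condition (\ref{K2}), and equivalent deformations correspond to cohomologous cocycles, this isomorphism allows me to choose the linear term $\psi_{1}$ of the deformation inside the radical: the representative cocycle can be taken to satisfy $\psi_{1}(X,\cdot)=0$ for all $X\in\frak{s}'$ and $\psi_{1}(\frak{r},\frak{r})\subseteq\frak{r}$, exactly as the invariance computation (\ref{IB1}) in the proof of the Proposition forces $\varphi(Y,Z)\in\frak{r}$. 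Consequently, to first order the deformed bracket leaves the semisimple factor untouched, keeps the cross-brackets $\left[\frak{s}',\frak{r}\right]_{t}$ equal to zero, and modifies only the brackets internal to $\frak{r}$, so that the infinitesimal deformation has the form $\frak{s}'\oplus\frak{r}_{t}$, a direct sum of ideals.

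The main obstacle is upgrading this infinitesimal statement to the actual deformed algebra at finite parameter, that is, ruling out that higher-order terms $\psi_{m}$ with $m\geq 2$ reintroduce a coupling between $\frak{s}'$ and $\frak{r}$ or deform the trivial $\frak{s}'$-action into a nontrivial representation on the radical. Here I would invoke the stability of the semisimple part: as $H^{2}(\frak{s}',\frak{s}')=0$, the subalgebra $\frak{s}'$ persists as a Levi subalgebra of every deformation, and the corresponding first-degree vanishing theorems obstruct turning the zero representation of $\frak{s}'$ on the radical into a nonzero one. The radical of $\frak{g}'_{t}$ therefore remains complemented by $\frak{s}'$ as a direct factor, whence $\frak{g}'_{t}\cong\frak{s}'\oplus\frak{r}_{t}$ is decomposable for every $t$. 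Since $\frak{g}\cong\frak{g}'_{t}$ was assumed indecomposable, this yields the desired contradiction.
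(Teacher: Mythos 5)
Your proposal is correct and coincides with the paper's own (implicit) argument: the corollary is presented there as a direct consequence of Proposition 1 together with the reversibility of contractions, which is exactly the contradiction you set up — any deformation of $\frak{s}'\oplus\frak{r}$ has its linear term in $H^{2}(\frak{r},\frak{r})$ and therefore stays decomposable, so an indecomposable $\frak{g}$ cannot arise by reversing the contraction. Your additional step of invoking the stability of the semisimple factor to rule out higher-order re-coupling is precisely the content of the Page--Richardson stability theorem that the paper states in the following section, so you are filling in a detail the paper glosses over rather than taking a genuinely different route.
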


A direct consequence of this property is that deformations of
reductive Lie algebras $\frak{s}\oplus n L_{1}$ are always
decomposable. In particular, they cannot appear as contractions of
indecomposable Lie algebras having a nontrivial Levi decomposition
or semisimple Lie algebras. Moreover, any Lie algebra
$\frak{s}\oplus\frak{t}$ with $\frak{t}$ an arbitrary
$n$-dimensional algebra, contracts onto the reductive algebra
$\frak{s}\oplus n L_{1}$. This result does obviously not exclude
the possibility that an indecomposable Lie algebra contracts onto
a non-solvable decomposable algebra, it merely states that none of
the ideals intervening in the decomposition can be semisimple.
Large classes of Lie algebras having this type of contractions
exist, like semidirect products of semisimple and Heisenberg Lie
algebras \cite{C46}.

\section{Contractions of semisimple Lie algebras}

The previous interpretation of Lie algebras as points of a variety
provides us with some useful criteria to study deformations and
contractions. In \cite{Ri}, an important result concerning the
topology of orbits was obtained.  It makes precise the intuitive
idea about stability of Levi subalgebras observed previously.

\begin{theorem}{\rm \cite{Ri}}
Let $L=(V,\mu)$ be a Lie algebra, $\frak{s}$ a semisimple
subalgebra of $L$ and $\frak{r}$ the complementary subspace of
$\frak{s}$ in $V$. There exists a neighborhood $U^{\mu}\in
\mathcal{L}^{n}$ of $\mu$ such that if $\mu_{1}\in U^{\mu}$, then
the algebra $L_{1}=(V,\mu_{1})$ is isomorphic to a Lie algebra
$L^{\prime}=(V,\mu^{\prime})$ that satisfies the conditions

\begin{enumerate}
\item $\mu(X,X^{\prime})=\mu^{\prime}(X,X^{\prime}), \forall
X,X^{\prime}\in\frak{s}$, \item $\mu(X,Y)=\mu^{\prime}(X,Y),
\forall X\in\frak{s},Y\in \frak{r}$.
\end{enumerate}
\end{theorem}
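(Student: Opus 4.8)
The plan is to prove the statement by first establishing its infinitesimal form and then integrating. Conditions (1) and (2) assert precisely that the semisimple subalgebra $\frak{s}$ together with its action on the complement $\frak{r}$ --- the ``skeleton'' of the semidirect structure --- survives every small deformation up to isomorphism, only the bracket internal to $\frak{r}$ being allowed to change genuinely. This is the geometric counterpart of the Hochschild--Serre reduction (\ref{HS}), which already confines the obstruction $H^{2}(\frak{g},\frak{g})$ to the radical. The essential analytic input will be Whitehead's first and second lemmas, namely $H^{1}(\frak{s},M)=H^{2}(\frak{s},M)=0$ for every finite-dimensional $\frak{s}$-module $M$ \cite{Se}.

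As recalled in Section~2, the tangent directions to the orbit $\mathcal{O}(\mu)$ are exactly the $2$-coboundaries $B^{2}(\frak{g},\frak{g})=\mathrm{im}\,d$, which is what underlies the passage $f_{t}=\exp(-t\sigma)$. I would therefore first prove the infinitesimal assertion: for every $\psi\in Z^{2}(\frak{g},\frak{g})$ there is a $1$-cochain $\alpha\in C^{1}(V,V)$ such that $\psi-d\alpha$ vanishes on $\frak{s}\times\frak{s}$ and on $\frak{s}\times\frak{r}$. Decomposing cochains along $V=\frak{s}\oplus\frak{r}$ and regarding $V$ as an $\frak{s}$-module through $\mathrm{ad}_{\mu}$ (taking $\frak{r}$ to be $\frak{s}$-invariant by complete reducibility), the restriction of the cocycle identity $d\psi=0$ to $\frak{s}\times\frak{s}\times\frak{s}$ shows that $\psi|_{\frak{s}\times\frak{s}}$ is a $2$-cocycle of $\frak{s}$ with values in $V$; by the second Whitehead lemma it equals $d_{\frak{s}}\beta$ for some $\beta\in\mathrm{Hom}(\frak{s},V)$. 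Extending $\beta$ by zero on $\frak{r}$ to $\alpha_{1}\in\mathrm{End}(V)$ yields $(d\alpha_{1})|_{\frak{s}\times\frak{s}}=\psi|_{\frak{s}\times\frak{s}}$, so that $\psi'=\psi-d\alpha_{1}$ is a cocycle vanishing on $\frak{s}\times\frak{s}$.

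Next, restricting the cocycle identity of $\psi'$ to $\frak{s}\times\frak{s}\times\frak{r}$ and using $\psi'|_{\frak{s}\times\frak{s}}=0$ shows that $X\mapsto\psi'(X,\cdot)|_{\frak{r}}$ is a $1$-cocycle of $\frak{s}$ with values in $\mathrm{Hom}(\frak{r},V)$; the first Whitehead lemma makes it a coboundary furnished by some $\gamma\in\mathrm{Hom}(\frak{r},V)$. Extending $\gamma$ by zero on $\frak{s}$ to $\alpha_{2}\in\mathrm{End}(V)$, one obtains $(d\alpha_{2})|_{\frak{s}\times\frak{r}}=\psi'|_{\frak{s}\times\frak{r}}$ while $(d\alpha_{2})|_{\frak{s}\times\frak{s}}=0$, since $\alpha_{2}$ annihilates $\frak{s}$ and $[\frak{s},\frak{s}]\subset\frak{s}$. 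Hence $\psi-d(\alpha_{1}+\alpha_{2})$ vanishes on both $\frak{s}\times\frak{s}$ and $\frak{s}\times\frak{r}$, which is the desired infinitesimal statement; the two Whitehead lemmas must be applied in this order so that the second normalization does not reintroduce an $\frak{s}\times\frak{s}$ component.

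Finally I would integrate to the local statement. Consider the map $\Phi(g,\nu)=(g\cdot\nu)_{\mathrm{part}}$ on $GL(V)\times\mathcal{L}^{n}$ recording the partial brackets on $\frak{s}\times\frak{s}$ and $\frak{s}\times\frak{r}$, and seek $g=g(\mu_{1})$ near the identity solving $\Phi(g,\mu_{1})=\mu_{\mathrm{part}}$. The computation above identifies the differential $D_{g}\Phi$ at $(\mathrm{id},\mu)$ with $\alpha\mapsto(d\alpha)_{\mathrm{part}}$ and shows, through the Whitehead lemmas, that it is surjective onto the tangent space of admissible partial structures; the implicit function theorem then provides a local solution $g(\mu_{1})$ depending continuously on $\mu_{1}$, and setting $\mu'=g(\mu_{1})\cdot\mu_{1}$ and $L'=(V,\mu')$ gives an algebra isomorphic to $L_{1}$ satisfying (1) and (2). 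The main obstacle is exactly this integration: one must verify that the image of $\Phi$ --- the partial brackets extending to genuine Lie structures, cut out by the Jacobi identity --- is a smooth submanifold near $\mu_{\mathrm{part}}$, so that surjectivity of the differential yields a true local section and not merely a formal one. Equivalently, one must confirm that no higher-order obstruction arises beyond the cohomological one, which is precisely the confinement of all genuine deformations to $H^{2}(\frak{r},\frak{g})^{\frak{s}}$ expressed in (\ref{HS}).
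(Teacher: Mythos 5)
You should note at the outset that the paper itself offers no proof of this theorem: it is quoted verbatim from Page and Richardson \cite{Ri}, so your attempt must be judged against the known proof in the literature rather than against anything in the text. Your infinitesimal half is correct and is indeed the standard cohomological core of the result: replacing the given complement by an $\frak{s}$-invariant one is harmless (conditions (i) and (ii) together just say $\mu=\mu^{\prime}$ on $\frak{s}\times V$, which is independent of the choice of $\frak{r}$), the restriction of $d\psi=0$ to $\frak{s}\times\frak{s}\times\frak{s}$ does exhibit $\psi|_{\frak{s}\times\frak{s}}$ as an element of $Z^{2}(\frak{s},V)$, the second normalization via $Z^{1}(\frak{s},\mathrm{Hom}(\frak{r},V))$ is computed correctly, and your observation about the order of the two Whitehead-lemma steps is accurate. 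So you have proved that every cocycle in $Z^{2}(V,V)$ is cohomologous to one vanishing on $\frak{s}\times V$, i.e.\ the tangent-level version of the statement.

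The genuine gap is the integration step, and it is not a technicality you can wave at: the variety $\mathcal{L}^{n}$ is cut out by the quadratic Jacobi relations and is in general \emph{singular} at $\mu$, as is the subset of structures with prescribed partial brackets. Surjectivity of a linearization onto a Zariski tangent space at a singular point does not feed the implicit function theorem, and the ``smooth submanifold'' hypothesis you yourself flag as the main obstacle is essentially equivalent to what is to be proved, so your argument is circular exactly where it matters. The closing appeal to (\ref{HS}) cannot close this: the Hochschild--Serre isomorphism computes $H^{2}$ of the fixed algebra and thus controls first-order deformations only; it says nothing about second- and higher-order obstructions, and in the present situation $H^{2}(\frak{g},\frak{g})\simeq H^{2}(\frak{r},\frak{g})^{\frak{s}}$ need not vanish, so no rigidity statement is available to rule out obstructions a priori --- indeed the whole point of the theorem is that deformations exist but are confined to the radical. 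Page and Richardson confront precisely this difficulty with the algebraic geometry of the $GL(V)$-action (local closedness of orbits, semicontinuity of $\dim Z^{2}$ and $\dim B^{2}$, dimension counts), and any repair of your plan needs some such device --- e.g.\ an orbit-dimension argument, or a Newton-type iteration combined with Artin approximation to pass from the formal to the local statement --- in place of the bare implicit function theorem. As written, your proposal establishes only the infinitesimal assertion.
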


In essence, this stability theorem, due to Page and Richardson
\cite{Ri}, establishes that if the Lie algebra $\frak{g}$ has a
semisimple subalgebra $\frak{s}$, then its deformations will have
some subalgebra isomorphic to $\frak{s}$, and that the action of
$\frak{s}$ on the remaining generators is preserved. Combined with
the Hochschild-Serre spectral sequence, this result tells that the
main information about deformations of semidirect products is
codified in the radical of the algebra. As application of this
theorem, we can establish the following result for non-solvable
contractions of semisimple Lie algebras:

\begin{proposition}
Let $\frak{g}=\frak{s}\overrightarrow{\oplus}_{R}\frak{r}$ be a
contraction of a semisimple Lie algebra $\frak{s}^{\prime}$. Then
the following holds:
\begin{enumerate}

\item there exists some semisimple subalgebra $\frak{s}_{1}$ of
$\frak{s}^{\prime}$ isomorphic to $\frak{s}$,

\item identifying $\frak{s}$ with $\frak{s}_{1}$ via an
isomorphism, the adjoint representation of $\frak{s}^{\prime}$
decomposes as $ad(\frak{s}^{\prime})|_{\frak{s}}=
ad(\frak{s})\oplus R$ with respect to the embedding
$\frak{s}\hookrightarrow \frak{s}^{\prime}$.

\item $\frak{g}$ has at least ${\rm rank}(\frak{s}^{\prime})$
independent Casimir operators.
\end{enumerate}
\end{proposition}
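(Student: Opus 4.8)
The plan is to prove the three assertions in order, leveraging the Page--Richardson stability theorem together with the reversibility of contractions. The starting point is the fundamental duality: since $\frak{g}=\frak{s}\overrightarrow{\oplus}_R\frak{r}$ is a contraction of the semisimple algebra $\frak{s}'$, the algebra $\frak{s}'$ arises as a deformation of $\frak{g}$. Thus $\frak{s}'=(V,\mu')$ where $\mu'$ lies in an arbitrarily small neighborhood $U^\mu$ of the law $\mu$ defining $\frak{g}$.

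For assertion (i), I would apply the stability theorem directly to $\frak{g}$, whose Levi subalgebra $\frak{s}$ is semisimple. The theorem guarantees that any law $\mu'$ in $U^\mu$ is isomorphic to a law $\mu''$ agreeing with $\mu$ on $\frak{s}\times\frak{s}$; that is, the restriction of $\mu''$ to the subspace $\frak{s}\subset V$ reproduces the bracket of $\frak{s}$. Hence $\frak{s}'$ contains a subalgebra $\frak{s}_1$ isomorphic to $\frak{s}$, which proves (i). The only subtlety is to ensure the deformation realizing the reversal actually lands in $U^\mu$; this follows from the definition of contraction as a limit, so that for $t$ large enough the contracted law is as close as desired to $\mu$ and, reading backwards, the deforming family stays in the neighborhood.

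For assertion (ii), I would exploit condition (2) of the stability theorem, which says $\mu''$ also preserves the bracket $\frak{s}\times\frak{r}\to V$. This means the action of $\frak{s}_1\cong\frak{s}$ on the complementary subspace $\frak{r}$ is identical to the representation $R$ defining the semidirect structure of $\frak{g}$. On the other hand, $\frak{s}_1$ acts on all of $\frak{s}'$ by the restriction of the adjoint representation, and under the vector space splitting $V=\frak{s}\oplus\frak{r}$ this restriction decomposes as the adjoint action on $\frak{s}$ itself plus the action on $\frak{r}$. Since the latter is precisely $R$, we obtain $ad(\frak{s}')|_{\frak{s}}=ad(\frak{s})\oplus R$, which is (ii). I expect this to be the main obstacle, because one must argue that the $\frak{s}$-module decomposition of $\frak{s}'$ is a genuine direct sum of representations and not merely a vector space splitting; this relies on complete reducibility of representations of the semisimple algebra $\frak{s}$ (Weyl's theorem), which guarantees that the invariant complement $\frak{r}$ carries a well-defined submodule structure isomorphic to $R$.

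For assertion (iii), I would invoke the general principle that the number of functionally independent invariants (Casimir operators) cannot decrease under contraction. Since $\frak{s}'$ is semisimple of rank $r={\rm rank}(\frak{s}')$, it possesses exactly $r$ independent Casimir operators, and a contraction preserves at least this number of independent invariants in the contracted algebra $\frak{g}$. Therefore $\frak{g}$ has at least ${\rm rank}(\frak{s}')$ independent Casimir operators, establishing (iii). The routine verification here is merely to recall the standard count of invariants for semisimple algebras and the non-decreasing behavior under the contraction limit, so no detailed computation is required.
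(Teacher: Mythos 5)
Your proposal is correct and takes essentially the same route as the paper: you invoke the reversibility of contractions to realize $\frak{s}'$ as a deformation of $\frak{g}$, apply the Page--Richardson stability theorem to obtain the embedded copy of $\frak{s}$ and the preservation of its action on $\frak{r}$ (giving (i) and (ii)), and use the non-decrease of the number of independent invariants under contraction for (iii). The additional details you supply---checking that the reversing deformation lies in the neighborhood $U^{\mu}$, and noting that condition (2) of the theorem makes $\frak{r}$ a genuine $\frak{s}$-submodule isomorphic to $R$---merely make explicit steps the paper leaves implicit.
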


The proof is nothing but a slight variation of the stability
theorem. If $\frak{g}=\frak{s}\overrightarrow{\oplus}_{R}\frak{r}$
is a contraction of $\frak{s}^{\prime}$, then there exists some
deformation of $\frak{g}$ reversing the contraction \cite{We}. By
the stability theorem, this deformation has some subalgebra that
is isomorphic to the Levi part of $\frak{g}$, and acts the same
way on the generators of the radical. Therefore the embedding of
semisimple Lie algebras $\frak{s}\hookrightarrow
\frak{s}^{\prime}$ induces a branching rule for representations,
and the quotient algebra $\frak{s}^{\prime}/\frak{s}$, seen as an
$\frak{s}$-module, is isomorphic to the representation $R$, that
is, $ad(\frak{s}^{\prime})|_{\frak{s}}= ad(\frak{s})\oplus R$.
This proves (i) and (ii). Finally, the third condition follows
from the properties of contractions of invariants \cite{C24}.

\begin{corollary}
Let $\frak{s}$ be a semisimple Lie algebra of a semisimple algebra
$\frak{s}^{\prime}$, and $R$ be a representation of $\frak{s}$. If
$ad(\frak{s}^{\prime})|_{\frak{s}}\neq ad(\frak{s})\oplus R$, then
no Lie algebra with Levi decomposition
$\frak{s}\overrightarrow{\oplus}_{R}\frak{r}$ ( $\frak{r}$
solvable) can arise as a contraction of $\frak{s}^{\prime}$.
\end{corollary}

The problem of analyzing the non-solvable contractions of
semisimple Lie algebras $\frak{s}^{\prime}$ is therefore reduced
to analyze the deformations of Lie algebras having Levi
decomposition $\frak{s}\overrightarrow{\oplus}_{R}\frak{r}$, where
$\frak{s}$ is some semisimple subalgebra of $\frak{s}^{\prime}$,
$R$ is obtained from the branching rules with respect to the
embedding $\frak{s}\hookrightarrow \frak{s}^{\prime}$ and
$\frak{r}$ is a solvable Lie algebra. In view of the
Hochschild-Serre reduction theorem, whether such a deformation
onto a semisimple algebra is possible or not depends essentially
on the structure of the radical $\frak{r}$. In general, the
following cases can appear when studying the deformations
$\frak{g}_{t}$ of $\frak{s}\overrightarrow{\oplus}_{R}\frak{r}$:
\begin{enumerate}

\item $\frak{s}$ is a maximal semisimple subalgebra of
$\frak{s}^{\prime}$, and either $\frak{g}_{t}$ is isomorphic to
$\frak{s}^{\prime}$ or there exists a solvable Lie algebra
$\frak{r}^{\prime}$ such that $\frak{g}_{t}\simeq
\frak{s}\overrightarrow{\oplus}_{R}\frak{r}^{\prime}$.

\item $\frak{s}$ is not a maximal semisimple subalgebra of
$\frak{s}^{\prime}$. In this case, a deformation $\frak{g}_{t}$
that is not semisimple is either isomorphic to a semidirect
product $\frak{s}\overrightarrow{\oplus}_{R}\frak{r}^{\prime}$
with $\frak{r}^{\prime}$ solvable, or there exists a semisimple
subalgebra $\frak{s}_{1}$ of $\frak{s}^{\prime}$ and a
representation $R_{1}$ of $\frak{s}_{1}$ such that
$\frak{g}_{t}\simeq
\frak{s}_{1}\overrightarrow{\oplus}_{R_{1}}\frak{r}^{\prime}$ for
some solvable Lie algebra $\frak{r}^{\prime}$. If the latter
holds, then we have the chain $\frak{s}\hookrightarrow
\frak{s}_{1}\hookrightarrow \frak{s}^{\prime}$ of semisimple Lie
algebras, and the branching rule $ad(\frak{s}_{1})\oplus R_{1} =
ad(\frak{s})\oplus R$ is satisfied.
\end{enumerate}

Case (ii) is typical when we consider contractions of simple Lie
algebras onto double inhomogeneous Lie algebras \cite{He,He1,C49}.

\section{Contractions of semisimple Lie algebras in low dimension}

In order to determine all nonsolvable contractions of semisimple
Lie algebras, we must have a classification of indecomposable Lie
algebras having a nontrivial Levi decomposition. The first such
classifications in dimensions $n\leq 8$ are due to Turkowski
\cite{Tu}. In the following, we use the notation of this paper to
label the Lie algebras. For completeness, the structure constants
are given in Table A1 of the appendix. For notational purposes, we
adopt the convention that the term $D_{j}$ for
$j\in\frac{1}{2}\mathbb{Z}$ denotes the irreducible representation
with highest weight $2j$ of $\frak{sl}(2,\mathbb{R})$, while
$R_{4}^{II}$ and $R_{5}$ are, respectively, the irreducible
representations of dimension 4 and 5 of $\frak{so}(3)$.

\medskip

Since the Levi decomposition of Lie algebras is trivial up to
dimension four, we will obtain nontrivial results from dimension 5
onwards. In the following, we suppose that
$\frak{g}=\frak{s}\overrightarrow{\oplus}_{R}\frak{r}$ is an
indecomposable Lie algebra with nontrivial Levi decomposition.

\subsection{$\dim\frak{g}=5$}

If $\dim\frak{g}=5$, then $\frak{g}$ must be isomorphic to the
special affine Lie algebra
$\frak{sa}(2,\mathbb{R})=\frak{sl}(2,\mathbb{R})
\overrightarrow{\oplus}_{D_{\frac{1}{2}}}2L_{1}$. Since it is the
only Lie algebra in this dimension having a nontrivial Levi
decomposition, and no five dimensional semisimple algebras exists,
it must be stable, and does not arise as a contraction. Further,
by corollary 1, any contraction of $\frak{sa}(2,\mathbb{R})$ is
necessarily a solvable Lie algebra.

\subsection{$\dim\frak{g}=6$}

In dimension six, we have the simple Lorentz algebra
$\frak{so}(3,1)$ and the semisimple Lie algebras
$\frak{so}(4)=\frak{so}(3)\oplus\frak{so}(3)$,
$\frak{so}^{*}(4)=\frak{so}(3)\oplus\frak{sl}(2,\mathbb{R})$ and
$\frak{so}(2,2)=\frak{sl}(2,\mathbb{R})\oplus\frak{sl}(2,\mathbb{R})$.
Additionally, four indecomposable non-solvable algebras $L_{6,j}$
($j=1..4$) with nontrivial Levi decomposition exist (see
appendix). By proposition 2, only the algebras
$L_{6,1}=\frak{so}(3)\overrightarrow{\oplus}_{ad}3L_{1}$ and
$L_{6,4}=\frak{sl}(2,\mathbb{R})\overrightarrow{\oplus}_{D_{1}}3L_{1}$
can arise as a contraction of a semisimple algebra. Since these
algebras are isomorphic to the inhomogeneous algebras
$I\frak{so}(3)$ and $I\frak{sl}(2,\mathbb{R})$, we recover the
well known contractions \cite{We1}:
\begin{equation}
\begin{diagram}
\frak{so}\left(  2,2\right)   &  &  &  & \frak{so}\left(  4\right)   & \\
\dTo &   & \frak{so}\left(  3,1\right)   &  & \dTo & \\
& \ldTo  &  &  \rdTo &  & \\
I\frak{sl}\left(  2,\mathbb{R}\right)   &  &  &  &
I\frak{so}\left(  3\right) &
\end{diagram}
\end{equation}
On the contrary, the semisimple algebra $\frak{so}^{*}(4)$ has no
indecomposable contractions with nonzero Levi part. Any
non-solvable contraction of it is necessarily a direct sum of a
simple and a solvable algebra. For the remaining algebras,
$L_{6,2}$ and $L_{6,3}$, it is not difficult to show, applying the
Hochschild-Serre reduction theorem, that the identities
$H^{2}(L_{6,2},L_{6,2})=H^{2}(L_{6,3},L_{6,3})=0$ hold, from which
we deduce that these algebras are stable. It is trivial to verify
that both $L_{6,2}$ and $L_{6,3}$ contract onto the decomposable
algebra $\frak{sa}(2,\mathbb{R})\oplus L_{1}$. We resume the
situation in the following

\begin{proposition}
A six dimensional indecomposable Lie algebra with nontrivial Levi
decomposition is either stable or the contraction of a semisimple
Lie algebra.
\end{proposition}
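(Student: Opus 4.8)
The plan is to rely on Turkowski's classification \cite{Tu}, which provides exactly four indecomposable six-dimensional Lie algebras with nontrivial Levi decomposition, namely $L_{6,1},\dots,L_{6,4}$. The statement then amounts to checking, algebra by algebra, that each satisfies one of the two alternatives. I would organize the verification so that $L_{6,1}$ and $L_{6,4}$ realize the contraction alternative, while $L_{6,2}$ and $L_{6,3}$ realize the stability alternative.

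First I would treat $L_{6,1}$ and $L_{6,4}$. Using the isomorphisms $L_{6,1}\simeq I\frak{so}(3)$ and $L_{6,4}\simeq I\frak{sl}(2,\mathbb{R})$, it suffices to exhibit the In\"on\"u--Wigner contractions displayed in the diagram above: $\frak{so}(4)$ and $\frak{so}(3,1)$ contract onto $I\frak{so}(3)$, and $\frak{so}(2,2)$ and $\frak{so}(3,1)$ onto $I\frak{sl}(2,\mathbb{R})$. Consistency with Proposition 2 is immediate, since the relevant embeddings $\frak{so}(3)\hookrightarrow\frak{so}(4)$ and $\frak{sl}(2,\mathbb{R})\hookrightarrow\frak{so}(2,2)$ produce precisely the branching $ad(\frak{s}')|_{\frak{s}}=ad(\frak{s})\oplus R$ with $R$ the adjoint (respectively vector) representation carried by the abelian radical $3L_{1}$. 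Both algebras thus fall under the contraction alternative.

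The principal task is to show that $L_{6,2}$ and $L_{6,3}$ are stable, that is $H^{2}(L_{6,j},L_{6,j})=0$ for $j=2,3$. By the Hochschild--Serre reduction (\ref{HS}) this amounts to computing the space $H^{2}(\frak{r},\frak{g})^{\frak{s}}$ of $\frak{s}$-invariant classes. The point where care is needed is that, in contrast to the direct-sum situation of Proposition 1, here $\frak{g}$ is a genuine semidirect product, so the argument forcing every invariant cochain to take values in $\frak{r}$ is no longer available; one must instead analyze the invariant $2$-cochains $\varphi\in C^{2}(\frak{r},\frak{g})$ as maps into the full $\frak{s}$-module $\frak{g}$. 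Using the explicit structure constants of Table A1, I would write down the invariance condition $(X\varphi)(Y,Z)=0$ for $X\in\frak{s}$, impose the cocycle equation $d\varphi=0$, and then subtract the image of the coboundary operator restricted to invariant $1$-cochains.

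The hard part will be exactly this cohomology computation, since the representation $R$ of the Levi subalgebra on the radical differs from the adjoint and vector cases, and the space of invariant cocycles must be cut out carefully before the coboundaries are removed. The expectation is that $\frak{s}$-invariance together with the internal brackets of $\frak{r}$ forces every invariant $2$-cocycle to be a coboundary, so that the two spaces vanish. Once $H^{2}(L_{6,2},L_{6,2})=H^{2}(L_{6,3},L_{6,3})=0$ is established, the discussion of Section 2 shows that these algebras admit no nontrivial deformation and hence cannot arise as a nontrivial contraction; they are stable. With both alternatives realized across $L_{6,1},\dots,L_{6,4}$, the dichotomy of the proposition follows.
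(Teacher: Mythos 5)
Your proposal is correct and coincides with the paper's own argument: the paper likewise runs through Turkowski's four algebras, identifies $L_{6,1}\simeq I\frak{so}(3)$ and $L_{6,4}\simeq I\frak{sl}(2,\mathbb{R})$ as the classical contractions of $\frak{so}(4)$, $\frak{so}(3,1)$ and $\frak{so}(2,2)$ (with Proposition 2 excluding the other two from being contractions of semisimple algebras), and establishes stability of $L_{6,2}$ and $L_{6,3}$ by computing $H^{2}(L_{6,j},L_{6,j})=0$ via the Hochschild--Serre reduction. Your explicit caveat that, for a genuine semidirect product, invariant cochains need not take values in the radical is a sound precaution that the paper leaves implicit in its ``not difficult to show'' computation.
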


We remark that this result, in combination with corollary 1,
provide a complete analysis of the non-solvable contractions of
semisimple Lie algebras in this dimension.

\subsection{$\dim\frak{g}=7$}

Although there are no semisimple Lie algebras in dimension seven,
this dimension is of interest, since we find the lowest
dimensional examples of non-solvable Lie algebras that do not
arise as a contraction, but are nevertheless not stable. According
to the classification in \cite{Tu}, there are seven isomorphism
classes $L_{7,j}$, one of them depending on a continuous parameter
$p$.

\begin{proposition}
Let $\frak{g}=L_{7,j}$ with $j\neq 3$. Then
$H^{2}(L_{7,j},L_{7,j})=0$ and $L_{7,j}$ is stable.\newline If
$\frak{g}=L_{7,3}^{p}$, then
\begin{equation*}
\dim H^{2}\left(
L_{7,3}^{p},L_{7,3}^{p}\right)  =\left\{
\begin{tabular}
[c]{ll}%
$1,$ & $p\neq2$\\
$2,$ & $p=2$%
\end{tabular}
\right.  .
\end{equation*}
Moreover, the cohomology classes are given by $\left[
\varphi_{1}\right]  $ and $\left[  \varphi_{2}\right]  $, where
\[
\varphi_{1}\left(  X_{6},X_{7}\right)  =X_{6};\;\varphi_{2}\left(  X_{4}%
,X_{5}\right)  =X_{6}.
\]
\end{proposition}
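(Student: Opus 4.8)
The plan is to compute each cohomology space via the Hochschild--Serre reduction~(\ref{HS}), which for a Levi decomposition $\frak{g}=\frak{s}\overrightarrow{\oplus}_{R}\frak{r}$ gives $H^{2}(\frak{g},\frak{g})\simeq H^{2}(\frak{r},\frak{g})^{\frak{s}}$. Since in dimension seven an indecomposable algebra with nontrivial Levi part necessarily has $\dim\frak{s}=3$ and $\dim\frak{r}=4$ (a one-dimensional radical would force a trivial action of $\frak{s}$ and hence decomposability), the Levi subalgebra is either $\frak{sl}(2,\mathbb{R})$ or $\frak{so}(3)$, and $R$ is the four-dimensional representation read off from Table A1. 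I would organize the whole computation around the $\frak{s}$-module structure of $\wedge^{2}\frak{r}$, of $\frak{g}=\frak{s}\oplus\frak{r}$, and of the space of coboundaries.

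First I would characterize the $\frak{s}$-invariant $2$-cochains. The invariance condition displayed after~(\ref{HS}) says precisely that an admissible $\varphi$ is an $\frak{s}$-module homomorphism from $\wedge^{2}\frak{r}$ to $\frak{g}=\frak{s}\oplus\frak{r}$. By Schur's lemma the dimension of this space is governed by the common irreducible constituents of the two modules, so decomposing $\wedge^{2}\frak{r}$ and $\frak{s}\oplus\frak{r}$ into $\frak{s}$-irreducibles (via the Clebsch--Gordan rules for $\frak{sl}(2,\mathbb{R})$, respectively $\frak{so}(3)$) immediately bounds the candidate space and exhibits an explicit basis of invariant cochains. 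This step replaces the full cochain space $C^{2}(\frak{r},\frak{g})$ by a small, explicitly parametrized family.

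Next I would impose the cocycle condition $d\varphi=0$ of~(\ref{KRO}), restricted to $\frak{r}$, on this finite family, and then quotient by the invariant coboundaries. For the algebras $L_{7,j}$ with $j\neq 3$ I expect the representation-theoretic constraints already to be so restrictive that every invariant cocycle is a coboundary, whence $H^{2}=0$ and stability follows from the criterion of Section~2. For $L_{7,3}^{p}$ the branching leaves room for the two explicit cochains $\varphi_{1}(X_{6},X_{7})=X_{6}$ and $\varphi_{2}(X_{4},X_{5})=X_{6}$; I would check directly that both satisfy the invariance and cocycle conditions and that neither is a coboundary.

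The crux is the $p$-dependence of $L_{7,3}^{p}$. Here the cocycle condition for $\varphi_{2}$ (or for a suitable combination with $\varphi_{1}$) produces a single scalar relation whose coefficient is a function of $p$, so that $\varphi_{2}$ represents a genuine, non-coboundary class only at the degenerate value $p=2$, while $\varphi_{1}$ survives for every $p$. Isolating exactly when this integrability relation forces a combination to collapse into a coboundary is the delicate point, and it is what accounts for the jump from $\dim H^{2}=1$ to $\dim H^{2}=2$ at $p=2$. Everything else reduces to bookkeeping of the Clebsch--Gordan decompositions against the structure constants of Table A1.
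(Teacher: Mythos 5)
Your proposal is correct and takes essentially the same route as the paper, whose entire proof is the single sentence ``the proof follows by direct computation using the Hochschild--Serre reduction'': your Schur-lemma organization of the $\frak{s}$-invariant cochains $\wedge^{2}\frak{r}\rightarrow\frak{g}$, followed by the cocycle condition and the quotient by invariant coboundaries, is exactly how that computation is carried out, and your predicted scalar relation in $p$ is real (one finds $d\varphi_{2}(X_{4},X_{5},X_{7})=(2-p)X_{6}$, so $\varphi_{2}$ is a cocycle precisely when $p=2$, while $\varphi_{1}$ is a non-trivial cocycle for all $p$). The only slip is your closing phrase about a combination ``collapsing into a coboundary'' at generic $p$: the second class disappears because $\varphi_{2}$ fails the cocycle condition itself, not because it becomes exact.
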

The proof follows by direct computation using the Hochschild-Serre
reduction. In particular, all deformations of $L_{7,3}^{p}$ for
$p\neq 2$ lie in the same family, and no contraction among these
algebras is possible since $\dim Der(L_{7,3}^{p})=8$ for all $p$,
and any contraction increases the number of derivations. For
$p=2$, two independent deformations are possible, since the
integrability condition (\ref{K3}) implies that
$\epsilon_{1}\epsilon_{2}=0$. The deformation
$L_{7,3}^{2}+\epsilon_{1}\varphi_{1}$ is isomorphic to
$L_{7,3}^{2+\epsilon_{1}}$, and clearly non-invertible by the
dimension of the algebra of derivations, while the deformation
$L_{7,3}^{2}+\epsilon_{2}\varphi_{2}$ has a non-abelian radical
and is easily seen to be isomorphic to $L_{7,4}$. It is
straightforward to verify that we obtain the contraction
$L_{7,4}\longrightarrow L_{7,3}^{2}$.

\begin{proposition}
A seven dimensional indecomposable Lie algebra $\frak{g} $ with
nontrivial Levi decomposition is a contraction of a Lie algebra if
and only if $\frak{g}\simeq L_{7,3}^{2}$.
\end{proposition}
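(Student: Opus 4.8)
The plan is to establish the final Proposition 7 by combining the stability
results of Proposition 5 with the explicit deformation analysis already carried
out. The statement asserts that among the seven isomorphism classes $L_{7,j}$
of seven-dimensional indecomposable Lie algebras with nontrivial Levi
decomposition, exactly one of them, $L_{7,3}^{2}$, arises as a contraction of
some Lie algebra. First I would recall the general principle established in
Section 2: a Lie algebra $\frak{g}^{\prime}$ can arise as a contraction only if
it admits a nontrivial invertible deformation, equivalently only if
$H^{2}(\frak{g}^{\prime},\frak{g}^{\prime})\neq 0$ (since
$H^{2}(\frak{g},\frak{g})=0$ forces stability). Thus the candidates for being a
contraction are precisely those $L_{7,j}$ with nonvanishing second cohomology.

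The first step is to invoke Proposition 5, which splits the seven classes into
two groups. For $j\neq 3$ one has $H^{2}(L_{7,j},L_{7,j})=0$, so each such
$L_{7,j}$ is stable and, by the stability criterion, cannot be the contraction
of any Lie algebra. This immediately eliminates six of the seven candidates. The
only remaining case is the one-parameter family $L_{7,3}^{p}$, for which
Proposition 5 gives $\dim H^{2}(L_{7,3}^{p},L_{7,3}^{p})=1$ when $p\neq 2$ and
$=2$ when $p=2$. Hence the question reduces entirely to determining which
members of this family actually occur as contractions.

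The second step is to analyze the family $L_{7,3}^{p}$ using the deformation
discussion following Proposition 5. For $p\neq 2$ the unique deformation
direction is $\varphi_{1}$, whose integration yields $L_{7,3}^{p+\epsilon_{1}}$,
i.e. a nearby member of the same family. This deformation is non-invertible
because $\dim Der(L_{7,3}^{p})=8$ is constant along the family while any genuine
contraction must strictly increase the dimension of the derivation algebra; so
no $L_{7,3}^{p}$ with $p\neq 2$ is a contraction of another algebra. For $p=2$
the extra cohomology class $\varphi_{2}$ (with $\varphi_{2}(X_{4},X_{5})=X_{6}$)
produces a deformation with non-abelian radical that is isomorphic to
$L_{7,4}$, and reversing this deformation gives the genuine contraction
$L_{7,4}\longrightarrow L_{7,3}^{2}$. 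Since a deformation reverses a contraction
(the result of \cite{LN} cited in Section 2), the existence of this nontrivial
invertible deformation of $L_{7,3}^{2}$ shows that $L_{7,3}^{2}$ is indeed a
contraction, namely of $L_{7,4}$.

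Collecting these observations proves both directions of the equivalence. If
$\frak{g}\simeq L_{7,3}^{2}$ then the contraction $L_{7,4}\to L_{7,3}^{2}$
exhibits it as a contraction. Conversely, if $\frak{g}=L_{7,j}$ is a contraction
of some Lie algebra, then $\frak{g}$ is not stable, so
$H^{2}(\frak{g},\frak{g})\neq 0$; by Proposition 5 this forces $j=3$, and the
derivation-dimension argument rules out every $L_{7,3}^{p}$ with $p\neq 2$,
leaving $\frak{g}\simeq L_{7,3}^{2}$. I expect the main obstacle to be the
second step, and specifically the non-invertibility claim for the
$\varphi_{1}$-deformations: one must confirm that deforming within the family
never increases $\dim Der$, so that the monotonicity of the derivation dimension
under contraction genuinely excludes these cases. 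The identification
$L_{7,3}^{2}+\epsilon_{2}\varphi_{2}\simeq L_{7,4}$ and the verification that the
associated limit is a bona fide contraction are the other computational points,
but they are already asserted in the text preceding the proposition and may be
quoted directly.
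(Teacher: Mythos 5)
Your proposal is correct and follows essentially the same route as the paper: vanishing of $H^{2}(L_{7,j},L_{7,j})$ for $j\neq 3$ gives stability, the constancy of $\dim Der(L_{7,3}^{p})=8$ together with the fact that contractions increase the derivation algebra rules out the $\varphi_{1}$-deformations within the family, and the $\varphi_{2}$-deformation of $L_{7,3}^{2}$ onto $L_{7,4}$, once reversed, yields the only contraction $L_{7,4}\longrightarrow L_{7,3}^{2}$. Nothing essential differs from the argument given in the text preceding the proposition.
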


In particular, the algebras of the family $L_{7,3}^{p}$ with
$p\neq 2$ are neither stable nor contractions of another algebra.
From this dimension onwards, this pathology appears in any
dimension. We observe further that, to some extent, the stability
of the remaining algebras (those not depending on a parameter) is
due to the nonexistence of semisimple Lie algebras in this
dimension.

\subsection{$\dim\frak{g}=8$}

In dimension 8, the only real semisimple Lie algebras are the real
forms of $A_{2}$, that is, the compact algebra $\frak{su}(3)$ and
the non-compact algebras $\frak{su}(2,1)$ and
$\frak{sl}(3,\mathbb{R})$. In order to obtain the contractions of
these algebras that are indecomposable and have a nonzero Levi
part, we must determine all possible embeddings of rank one simple
subalgebras and their corresponding branching rules.

\begin{proposition}
Let $\frak{s}^{\prime}\hookrightarrow \frak{s}$ be a semisimple
subalgebra of a semisimple Lie algebra $\frak{s}$ of dimension
$8$. If the indecomposable Lie algebra
$\frak{g}=\frak{s}^{\prime}\overrightarrow{\oplus}_{R}\frak{r}$ is
a contraction of $\frak{s}$, then $\frak{s}^{\prime}$ is a maximal
simple subalgebra of $\frak{s}$ and one of the following cases
holds:
\begin{enumerate}

\item $\frak{s}^{\prime}\simeq \frak{sl}(2,\mathbb{R})$ and
$R=2D_{\frac{1}{2}}\oplus D_{0}$ or $D_{1}$.

\item $\frak{s}^{\prime}\simeq \frak{so}(3)$ and $R=R_{5}$ or
$R_{4}^{II}\oplus D_{0}$.
\end{enumerate}
\end{proposition}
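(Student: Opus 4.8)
The plan is to combine the branching criterion of proposition~2 and corollary~2 with the classification of $A_{1}$-subalgebras of $A_{2}$, descending at the end to the three real forms. Since a contraction preserves dimension, $\dim\frak{g}=8$, and by proposition~2(i) the Levi factor $\frak{s}^{\prime}$ is isomorphic to a semisimple subalgebra of $\frak{s}$; as $\frak{g}$ carries a nontrivial Levi decomposition its radical $\frak{r}$ is nonzero, so $\dim\frak{s}^{\prime}<8$ and the inclusion $\frak{s}^{\prime}\hookrightarrow\frak{s}$ is proper. The eight-dimensional algebras $\frak{su}(3)$, $\frak{su}(2,1)$ and $\frak{sl}(3,\mathbb{R})$ are the real forms of $A_{2}$, hence of rank two, and the root system $A_{2}$ contains no orthogonal pair of roots; consequently it has no $A_{1}\times A_{1}$ subsystem and $\frak{s}$ possesses no proper semisimple subalgebra of rank two. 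Therefore $\frak{s}^{\prime}$ is simple of rank one, of dimension $3$, and, since no proper semisimple subalgebra of $\frak{s}$ can contain it, maximal. This already yields the maximality assertion and forces $\dim\frak{r}=5$, so that $R$ is a five-dimensional representation of $\frak{s}^{\prime}$.

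Next I would fix $R$ through the branching rule. By proposition~2(ii), once $\frak{s}^{\prime}$ is identified with its image one has $ad(\frak{s})|_{\frak{s}^{\prime}}=ad(\frak{s}^{\prime})\oplus R$, so $R$ is precisely the $\frak{s}^{\prime}$-module complement of the adjoint inside $ad(\frak{s})$, and corollary~2 discards any candidate failing this identity. Over $\mathbb{C}$ the nonzero $A_{1}$-subalgebras of $A_{2}$ form, by Dynkin's classification (equivalently, the nilpotent orbits labelled by the partitions $[2,1]$ and $[3]$ of $3$), exactly two conjugacy classes: the regular (block) one, on which the defining module restricts as $\mathbf{3}=D_{\frac{1}{2}}\oplus D_{0}$, and the principal (irreducible) one, on which $\mathbf{3}=D_{1}$. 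Comparing eigenvalues of the distinguished semisimple element decomposes the eight-dimensional adjoint as $ad(\frak{s})|_{A_{1}}=D_{1}\oplus 2D_{\frac{1}{2}}\oplus D_{0}$ in the regular case and $ad(\frak{s})|_{A_{1}}=D_{1}\oplus D_{2}$ in the principal case; deleting the adjoint summand $D_{1}$ leaves the complexified module $2D_{\frac{1}{2}}\oplus D_{0}$, respectively the five-dimensional irreducible $D_{2}$.

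It then remains to pass to the real forms, and this is where the dichotomy between $\frak{sl}(2,\mathbb{R})$ and $\frak{so}(3)$ arises. For each of $\frak{su}(3)$, $\frak{su}(2,1)$, $\frak{sl}(3,\mathbb{R})$ I would identify which real form of $A_{1}$ realizes each embedding, reading it off the invariant form preserved on the defining module: an $A_{1}$ acting irreducibly on a definite three-space yields $\frak{so}(3)$, on an indefinite one $\frak{so}(2,1)\cong\frak{sl}(2,\mathbb{R})$, and similarly for the block subalgebras. One then realifies the complex module. Because every finite-dimensional irreducible representation of $\frak{sl}(2,\mathbb{R})$ is of real type, for $\frak{s}^{\prime}\cong\frak{sl}(2,\mathbb{R})$ the real module equals its complexification, giving $R=2D_{\frac{1}{2}}\oplus D_{0}$ (regular) or the five-dimensional irreducible $R=D_{2}$ (principal). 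For $\frak{s}^{\prime}\cong\frak{so}(3)\cong\frak{su}(2)$ the spin-$\frac{1}{2}$ representation is quaternionic, so the summand $2D_{\frac{1}{2}}$ realifies to the four-dimensional real irreducible $R_{4}^{II}$, whence $R=R_{4}^{II}\oplus D_{0}$ in the regular case, while the spin-two summand $D_{2}$ is of real type and gives $R=R_{5}$ in the principal case. These pairs are exactly the cases listed, and as a final consistency check one confirms, via proposition~2(iii), that each survivor carries at least $\mathrm{rank}(\frak{s})=2$ independent Casimir invariants.

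The step I expect to be the main obstacle is the real-form bookkeeping of the last paragraph: attaching to every embedding in each of the three real forms the correct real form of $A_{1}$, and, above all, tracking the reality type (real versus quaternionic) of each irreducible summand, since it is exactly the quaternionic character of the $\frak{su}(2)$ spin-$\frac{1}{2}$ module that fuses $2D_{\frac{1}{2}}$ into the single real irreducible $R_{4}^{II}$ and so separates case (ii) from case (i). The rank and dimension count and the complexified branchings, by contrast, are routine. Note finally that the statement is a necessary condition filtering the admissible pairs $(\frak{s}^{\prime},R)$; deciding which of them actually occur as genuine contractions is a separate matter.
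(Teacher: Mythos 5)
Your proposal is correct and proceeds essentially as the paper's own proof does: complexify, use the two conjugacy classes (regular and principal) of $A_{1}$ in $A_{2}$ to branch the adjoint representation as $D_{1}\oplus 2D_{\frac{1}{2}}\oplus D_{0}$, respectively $D_{1}\oplus D_{2}$, strip off the adjoint summand, and descend to the real forms, with your maximality argument and reality-type bookkeeping ($2D_{\frac{1}{2}}$ realifying to $R_{4}^{II}$ for $\frak{so}(3)$) merely making explicit steps the paper states without justification. One remark: the representation $D_{2}$ you obtain in the principal $\frak{sl}(2,\mathbb{R})$ case agrees with the paper's proof (equation (\ref{VR2})) and with the Levi decomposition $\frak{sl}(2,\mathbb{R})\overrightarrow{\oplus}_{D_{2}}5L_{1}$ of $L_{8,21}$, and is forced by $\dim\frak{r}=5$; the symbol $D_{1}$ in case (i) of the proposition's statement is evidently a typo for $D_{2}$.
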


\begin{proof}
Since any semisimple Lie algebra in dimension 8 has rank two, a
simple subalgebra is necessarily maximal as semisimple algebra. In
order to obtain the admissible representations $R$, it suffices to
consider the complexification
$\frak{g}\otimes_{\mathbb{R}}\mathbb{C}$ of $\frak{g}$. Then the
Levi subalgebra is isomorphic to $A_{1}$, and the problem reduces
to determine the branching rule for the adjoint representation
$\Gamma(1,1)_{\mathbb{C}}$ of $A_{2}$ with respect to the
non-equivalent embeddings of $A_{1}$ in $A_{2}$. If the embedding
$A_{1}\hookrightarrow A_{2}$ is regular \cite{On} , then the
representation $\Gamma(1,1)_{\mathbb{C}}$ decomposes as
\begin{equation}
\Gamma(1,1)_{\mathbb{C}}|_{A_{1}}= D_{1}\oplus
2D_{\frac{1}{2}}\oplus D_{0}.\label{VR1}
\end{equation}
Since $D_{1}$ is the adjoint representation of $A_{1}$, the only
possibility for $R_{\mathbb{C}}$ is
$R_{\mathbb{C}}=2D_{\frac{1}{2}}\oplus D_{0}$. Taking the real
forms of $\frak{sl}(2,\mathbb{C})$, we obtain that
$R=2D_{\frac{1}{2}}\oplus D_{0}$ if $\frak{s}^{\prime}\simeq
\frak{sl}(2,\mathbb{R})$ and $R=R_{4}^{II}\oplus D_{0}$ if
$\frak{s}^{\prime}\simeq \frak{so}(3)$.\newline For the nonregular
embedding $A_{1}\hookrightarrow A_{2}$, the corresponding
branching rule is easily obtained from (\ref{VR1}), and equals
\begin{equation}
\Gamma(1,1)_{\mathbb{C}}|_{A_{1}}= ad(A_{1})\oplus
R_{\mathbb{C}}=D_{2}\oplus D_{1}.\label{VR2}
\end{equation}
Taking the real forms, we obtain the irreducible representations
$D_{2}$ if $\frak{s}^{\prime}\simeq \frak{sl}(2,\mathbb{R})$ and
$R_{5}$ if $\frak{s}^{\prime}$ is compact.
\end{proof}

In view of the classification, the only Lie algebras having the
previous describing representations are
$\frak{F}=\left\{L_{8,2},L_{8,3},L_{8,4}^{p},L_{8,5},L_{8,13}^{\epsilon},L_{8,14},L_{8,15},L_{8,16},
,L_{8,17}^{p},L_{8,18}^{p},L_{8,21}\right\}$.

\begin{lemma}
The Lie algebras $L_{8,3},L_{8,4}^{p\neq
0},L_{8,16},L_{8,17}^{p\neq -1},L_{8,18}^{p\neq 0}$ do not arise
as a contraction of a semisimple Lie algebra.
\end{lemma}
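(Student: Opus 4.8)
The strategy is to rule out each listed algebra as a semisimple contraction by exhibiting an obstruction predicted by the theory already developed—principally Proposition 3 together with its Corollary 2, and the derivation-dimension monotonicity already invoked for the $L_{7,3}^{p}$ family. The unifying principle is that a non-solvable contraction of a semisimple $\frak{s}$ of dimension $8$ must reverse via a deformation back to $\frak{s}$, and by Proposition 3 this forces the representation $R$ of the Levi part to match one of the two admissible branching decompositions, $R=2D_{\frac{1}{2}}\oplus D_{0}$ or $D_{1}$ (for $\frak{sl}(2,\mathbb{R})$) and $R=R_{4}^{II}\oplus D_{0}$ or $R_{5}$ (for $\frak{so}(3)$). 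The candidates in the lemma sit inside the list $\frak{F}$ because they carry one of these representations, so the representation test alone does not eliminate them; the obstruction must come one level deeper, from the structure of the radical $\frak{r}$ and the associated cohomology $H^{2}(\frak{g},\frak{g})\simeq H^{2}(\frak{r},\frak{g})^{\frak{s}}$.

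First I would, for each algebra in the list $\{L_{8,3},L_{8,4}^{p\neq 0},L_{8,16},L_{8,17}^{p\neq -1},L_{8,18}^{p\neq 0}\}$, read off the structure constants from Table A1 and compute $H^{2}(\frak{g},\frak{g})$ via the Hochschild–Serre reduction, exactly as was done for $L_{6,2},L_{6,3}$ and the $L_{7,j}$. The expected outcome splits into two mechanisms. For the rigid cases (likely $L_{8,3}$ and $L_{8,16}$), I anticipate $H^{2}(\frak{g},\frak{g})=0$, so the algebra is stable and cannot be a contraction of anything, in particular not of a semisimple algebra. For the parametrized families $L_{8,4}^{p\neq 0}$, $L_{8,17}^{p\neq -1}$, $L_{8,18}^{p\neq 0}$, I expect nontrivial cohomology but of the "horizontal" type: every integrable cocycle merely shifts the continuous parameter $p$ within the same family, rather than producing a semisimple algebra. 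Here the decisive invariant is $\dim\mathrm{Der}(\frak{g})$—as used for $L_{7,3}^{p}$, a genuine contraction strictly increases the dimension of the derivation algebra, so if the parameter $p$ only slides within a family whose derivation dimension is constant, no contraction among them (and a fortiori from the rigid semisimple $A_2$) is possible.

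The key step, then, is to show that for the parametrized families the nonzero cocycles are non-integrable toward a semisimple target, or equivalently that the only deformations preserve the Levi–radical splitting type and stay within the family. Concretely, I would take a basis $[\varphi]$ of $H^{2}(\frak{r},\frak{g})^{\frak{s}}$, impose the integrability condition $\tfrac{1}{2}[\psi_{1},\psi_{1}]+d\psi_{2}=0$ from equation (\ref{K3}), and verify that the resulting deformed bracket is isomorphic to another member of the same family $L_{8,k}^{p'}$; comparing $\dim\mathrm{Der}$ before and after then gives the non-contraction conclusion. For the special values excluded by the hypotheses (the $p=0$, $p=-1$ cases one avoids), the branching would instead match the semisimple decomposition and a genuine contraction does occur—which is precisely why those values are excluded here and will be handled affirmatively elsewhere.

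The main obstacle I foresee is the integrability analysis for the parametrized families: a single admissible cocycle at the linear level does not guarantee the deformation is a semisimple algebra, and one must carefully compute the quadratic bracket $[\psi_{1},\psi_{1}]$ to decide whether the deformation is "trapped" inside the family or can escape to $A_2$. Distinguishing these two possibilities—especially isolating why the excluded parameter values behave differently—is where the real work lies, since it requires identifying the deformed algebra up to isomorphism rather than merely counting cocycles. The derivation-dimension count serves as the clean final certificate once the deformed algebra has been identified within its family.
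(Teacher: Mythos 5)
You have missed the argument the paper actually uses, and the substitute you outline contains a genuine logical gap. The paper's proof of this lemma is one line and requires no cohomology at all: each algebra in the list satisfies $\mathcal{N}(\frak{g})=0$, i.e.\ it has no independent invariants for the coadjoint representation, whereas by Proposition 2(iii) any contraction of an eight-dimensional semisimple algebra $\frak{s}^{\prime}$ (necessarily of rank two) must possess at least ${\rm rank}(\frak{s}^{\prime})=2$ independent Casimir operators; hence none of these algebras can arise as such a contraction. Your plan instead proposes computing $H^{2}(\frak{g},\frak{g})$ for all five families and running a full integrability analysis, but none of that is actually carried out: the vanishing of $H^{2}$ for $L_{8,3}$ and $L_{8,16}$ is merely ``anticipated,'' and for the parametrized families you explicitly defer the step you yourself identify as the real work (deciding whether integrable deformations are trapped in the family or can escape to $A_{2}$). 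As written, the proposal is a program whose decisive computations are conjectured rather than performed, while the cheap invariant-theoretic obstruction that settles the lemma is never invoked.

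The deeper problem is that the certificate you propose to close the argument does not do the job. The derivation-count monotonicity is used in the paper only to exclude contractions \emph{among} the members of the family $L_{7,3}^{p}$, where $\dim {\rm Der}$ is constant along the family; your inference ``no contraction among them, and a fortiori from the rigid semisimple $A_{2}$'' is not valid. An eight-dimensional semisimple algebra has $\dim {\rm Der}(\frak{s}^{\prime})=8$ (all derivations inner), so the monotonicity criterion obstructs a contraction $\frak{s}^{\prime}\rightarrow\frak{g}$ only if $\dim {\rm Der}(\frak{g})\leq 8$, and you compute no derivation algebras for these semidirect products, which in general admit outer derivations. Within your framework, the only thing that would actually certify the lemma is the exhaustive identification of every integrable deformation of each listed algebra as non-semisimple --- precisely the computation you postpone. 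A further confusion: the excluded parameter values ($p=0$ for $L_{8,4}^{p}$ and $L_{8,18}^{p}$, $p=-1$ for $L_{8,17}^{p}$) do not differ from the generic ones at the level of the branching rule, since the describing representation $R$ is the same for all $p$ in each family; what distinguishes them, and what the paper's proof exploits, is the jump in the number of invariants, with $\mathcal{N}(\frak{g})=0$ at generic $p$ versus a nonzero invariant count at the special values, where contractions from the rank-two simple algebras indeed exist.
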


The proof follows at once observing that these algebras satisfy
the condition $\mathcal{N}(\frak{g})=0$, where
$\mathcal{N}(\frak{g})$ denotes the number of independent
invariants for the coadjoint representation. By proposition 2,
they cannot be contractions of a Lie algebra having invariants.
For the remaining algebras, the existence or not of contractions
cannot be deduced from the usual numerical invariants that are
preserved or increased by contraction.\footnote{For a list of such
invariants, see e.g. \cite{Po,C24}.} In order to analyze whether
they are contractions of semisimple Lie algebras, we determine if
they admit deformations onto semisimple algebras. To this extent,
we apply the Hochschild-Serre reduction to compute a basis of
$H^{2}(\frak{g},\frak{g})$ and analyze the deformed bracket
(\ref{DK}). The bases for the adjoint cohomology are given in
Table 1.

\begin{table}
\begin{indented}\item[]
\caption{Adjoint cohomology of Lie algebras in $\frak{F}$.}
\begin{tabular}{@{}lcl}
$\frak{g}$ & $\dim H^{2} $ & Cocycle basis of $H^{2}(\frak{g},\frak{g})$\\
\hline\mr
$L_{8,2}$ & $1$ & $%
\begin{array}
[c]{l}%
\varphi\left(  X_{4},X_{5}\right)  =X_{2},\;\varphi\left(
X_{4},X_{6}\right) =X_{3},\;\varphi\left(  X_{4},X_{7}\right)
=X_{1},\;\varphi\left(
X_{4},X_{8}\right)  =-\frac{3}{2}X_{5},\\
\varphi\left(  X_{5},X_{6}\right)  =X_{1},\,\varphi\left(
X_{5},X_{7}\right) =-X_{3},\;\varphi\left(  X_{5},X_{8}\right)
=\frac{3}{2}X_{4},\;\varphi
\left(  X_{6},X_{7}\right)  =X_{2},\\
\varphi\left(  X_{6},X_{8}\right)
=\frac{3}{2}X_{7},\;\varphi\left( X_{7},X_{8}\right)
=-\frac{3}{2}X_{6}.
\end{array}
\ \ \ $  \\\ms
$L_{8,4}^{0}$ & $2$ & $%
\begin{array}
[c]{l}%
\varphi_{1}\left(  X_{4},X_{5}\right)  =X_{2},\;\varphi_{1}\left(  X_{4}%
,X_{6}\right)  =X_{3}+\frac{3}{2}X_{8},\;\varphi_{1}\left(  X_{4}%
,X_{7}\right)  =X_{1},\;\\
\varphi_{1}\left(  X_{5},X_{6}\right)  =X_{1},\;\varphi_{1}\left(  X_{5}%
,X_{7}\right)  =-X_{3}+\frac{3}{2}X_{8},\;\varphi_{1}\left(  X_{6}%
,X_{7}\right)  =X_{2}.\\
\varphi_{2}\left(  X_{4},X_{8}\right)  =X_{4},\;\varphi_{2}\left(  X_{5}%
,X_{8}\right)  =X_{5},\;\varphi_{2}\left(  X_{6},X_{8}\right)  =X_{6}%
,\;\varphi_{2}\left(  X_{7},X_{8}\right)  =X_{7}.
\end{array}
\ $\\\ms
$L_{8,5}$ & $1$ & $%
\begin{array}
[c]{l}%
\varphi\left(  X_{4},X_{5}\right)  =X_{3},\;\varphi\left(
X_{4},X_{6}\right) =X_{2},\;\varphi\left(  X_{4},X_{7}\right)
=X_{1},\;\varphi\left(
X_{5},X_{6}\right)  =-X_{1},\\
\varphi\left(  X_{5},X_{7}\right)  =X_{2},\;\varphi\left(
X_{6},X_{7}\right) =2X_{3},\;\varphi\left(  X_{6},X_{8}\right)
=-6X_{1},\;\varphi\left( X_{7},X_{8}\right)  =-6X_{2}.
\end{array}
$\\\ms
$L_{8,13}^{\varepsilon}$ & $1$ & $%
\begin{array}
[c]{l}%
\varphi\left(  X_{4},X_{6}\right)  =-2X_{2},\;\varphi\left(  X_{4}%
,X_{7}\right)  =X_{1},\;\varphi\left(  X_{4},X_{8}\right)
=-3\varepsilon
X_{6},\;\varphi\left(  X_{5},X_{6}\right)  =X_{1},\\
\varphi\left(  X_{5},X_{7}\right)  =2X_{3},\;\varphi\left(  X_{5}%
,X_{8}\right)  =-3\varepsilon X_{7},\;\varphi\left(
X_{6},X_{8}\right) =3X_{4},\;\varphi\left(  X_{7},X_{8}\right)
=3X_{5}.
\end{array}
$\\\ms
$L_{8,14}$ & $3$ & $%
\begin{array}
[c]{l}%
\varphi_{1}\left(  X_{4},X_{8}\right)  =X_{4},\;\varphi_{1}\left(  X_{5}%
,X_{8}\right)  =X_{5};\;\varphi_{2}\left(  X_{4},X_{8}\right)  =X_{6}%
,\;\varphi_{2}\left(  X_{5},X_{8}\right)  =X_{7};\\
\varphi_{3}\left(  X_{6},X_{7}\right)  =X_{8}.
\end{array}
$\\\ms
$L_{8,15}$ & $1$ & $%
\begin{array}
[c]{l}%
\varphi\left(  X_{4},X_{5}\right)  =3X_{8},\;\varphi\left(  X_{4}%
,X_{6}\right)  =-2X_{2},\;\varphi\left(  X_{4},X_{7}\right)  =X_{1}%
,\;\varphi\left(  X_{4},X_{8}\right)  =-3X_{6},\\
\varphi\left(  X_{5},X_{6}\right)  =X_{1},\;\varphi\left(
X_{5},X_{7}\right) =2X_{3},\;\varphi\left(  X_{5},X_{8}\right)
=-3X_{7}.
\end{array}
$\\\ms
$L_{8,17}^{-1}$ & $2$ & $%
\begin{array}
[c]{l}%
\varphi_{1}\left(  X_{4},X_{6}\right) =-2X_{2},\;\varphi_{1}\left(
X_{4},X_{7}\right) =X_{1}-3X_{8},\;\varphi_{1}\left(
X_{5},X_{6}\right)
=X_{1}+3X_{8},\\
\varphi_{1}\left(  X_{5},X_{7}\right)  =2X_{3};\;\varphi_{2}\left(
X_{6},X_{8}\right)  =X_{6},\;\varphi_{2}\left(  X_{7},X_{8}\right)
=X_{7}.
\end{array}
$\\\ms
$L_{8,18}^{0}$ & $2$ & $%
\begin{array}
[c]{l}%
\varphi_{1}\left(  X_{4},X_{5}\right)  =3X_{8},\;\varphi_{1}\left(
X_{4},X_{6}\right)  =-2X_{2},\;\varphi_{1}\left(
X_{4},X_{7}\right)
=X_{1},\;\varphi_{1}\left(  X_{5},X_{6}\right)  =X_{1},\\
\varphi_{1}\left(  X_{5},X_{7}\right)  =2X_{3},\;\varphi_{1}\left(
X_{6},X_{7}\right)  =3X_{8}.\\
\varphi_{2}\left(  X_{6},X_{8}\right)  =X_{6},\;\varphi_{2}\left(  X_{7}%
,X_{8}\right)  =X_{7}.
\end{array}
$\\\ms
$L_{8,21}$ & $1$ & $%
\begin{array}
[c]{l}%
\varphi\left(  X_{4},X_{7}\right)  =-2X_{2},\;\varphi\left(  X_{4}%
,X_{8}\right)  =X_{1},\;\varphi\left(  X_{5},X_{6}\right)  =6X_{2}%
,\;\varphi\left(  X_{5},X_{7}\right)  =-2X_{1},\\
\varphi\left(  X_{5},X_{8}\right)  =2X_{3},\;\varphi\left(  X_{6}%
,X_{7}\right)  =-6X_{3}.
\end{array}
$\\ \hline\br
\end{tabular}
\end{indented}
\end{table}

\begin{theorem}
The indecomposable Lie algebras $L_{8,2},L_{8,4}^{0},L_{8,5},
L_{8,13}^{\epsilon},L_{8,14},L_{8,15}, L_{8,17}^{-1},L_{8,18}^{0}$
and $L_{8,21}$ are all obtained as contractions of simple Lie
algebras. More precisely,
\begin{enumerate}

\item $\frak{su}(3)$ contracts onto the algebras $L_{8,2}$,
$L_{8,4}^{0}$ and $L_{8,5}$.

\item $\frak{su}(2,1)$ contracts onto the algebras
$L_{8,2},L_{8,4}^{0}, L_{8,13}^{1}, L_{8,14},L_{8,15},
L_{8,18}^{0}$ and $L_{8,21}$.

\item $\frak{sl}(3,\mathbb{R})$ contracts onto the algebras
$L_{8,5}, L_{8,13}^{-1}, L_{8,14},L_{8,15}, L_{8,17}^{-1}$ and
$L_{8,21}$.
\end{enumerate}
\end{theorem}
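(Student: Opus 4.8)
The plan is to combine the reversibility of contractions with the explicit cohomology recorded in Table 1. By the theorem of \cite{LN} (see also \cite{We}), every contraction $\frak{s}^{\prime}\rightarrow\frak{g}$ is reversed by a deformation of $\frak{g}$ onto $\frak{s}^{\prime}$; hence, to prove that one of the nine algebras $\frak{g}=L_{8,j}$ is a contraction of a semisimple Lie algebra it suffices to exhibit a one-parameter deformation $\frak{g}_{t}$ isomorphic to a simple algebra. I would treat the nine algebras one at a time.

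For each $\frak{g}$ I would select a cocycle $\varphi\in H^{2}(\frak{g},\frak{g})$ from Table 1 --- a single basis element when $\dim H^{2}=1$, and a suitable linear combination of the basis cocycles otherwise --- and form the deformed bracket $[X,Y]_{t}=[X,Y]+t\,\varphi(X,Y)$ as in (\ref{DK}). Since the cocycles of Table 1 adjoin brackets sending pairs of radical generators into the simple Levi part $\{X_{1},X_{2},X_{3}\}$, I would next verify the integrability condition (\ref{K3}) by checking $[\varphi,\varphi]=0$; for the algebras with $\dim H^{2}>1$ (namely $L_{8,4}^{0}$, $L_{8,14}$, $L_{8,17}^{-1}$ and $L_{8,18}^{0}$) this identity holds only for certain combinations $\sum_{i}c_{i}\varphi_{i}$, exactly as the constraint $\epsilon_{1}\epsilon_{2}=0$ arose for $L_{7,3}^{2}$, and I would isolate the integrable ones. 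By the remark following (\ref{K3}), an integrable cocycle makes $\frak{g}+t\varphi$ a genuine Lie algebra for all $t$, and by the stability theorem (Theorem 1) the Levi part and its action $R$ on the radical are preserved along the family, so each such deformation is of precisely the type permitted by Proposition 6 and Corollary 2.

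It then remains to identify the isomorphism type of $\frak{g}_{t}$ for $t\neq0$. I would verify semisimplicity via Cartan's criterion, showing that the Killing form of $\frak{g}_{t}$ becomes nondegenerate once $t\neq0$. Because each Levi part is of rank one and the only rank-two semisimple real Lie algebras of dimension eight are the real forms of $A_{2}$, the branching rules (\ref{VR1}), (\ref{VR2}) force $\frak{g}_{t}$ to be isomorphic to one of $\frak{su}(3)$, $\frak{su}(2,1)$ or $\frak{sl}(3,\mathbb{R})$. I would then read off which real form occurs from the signature of the deformed Killing form: the negative-definite case is the compact form $\frak{su}(3)$, while the indefinite signatures $(4,4)$ and $(5,3)$ distinguish $\frak{su}(2,1)$ from $\frak{sl}(3,\mathbb{R})$. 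The sign data entering the brackets --- in particular $\varepsilon=\pm1$ for $L_{8,13}^{\varepsilon}$ and the choice of real form of the acting $\frak{sl}(2,\mathbb{R})$ dictated by (\ref{VR1}), (\ref{VR2}) --- are precisely what fix the real form, producing the split $L_{8,13}^{1}\rightarrow\frak{su}(2,1)$ against $L_{8,13}^{-1}\rightarrow\frak{sl}(3,\mathbb{R})$ and the analogous assignments for the remaining algebras. Reversing each deformation then yields the claimed contractions and sorts the nine algebras into the lists (i)--(iii).

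The main obstacle I anticipate lies in this last identification. Checking nondegeneracy of the deformed Killing form is mechanical, but tracking its signature through the parameter- and sign-dependent families --- especially $L_{8,14}$ with $\dim H^{2}=3$ and the competing cocycles of $L_{8,13}^{\pm1}$, $L_{8,17}^{-1}$ and $L_{8,18}^{0}$ --- demands careful bookkeeping to match each integrable deformation with the correct compact or noncompact real form, and to confirm that the same simple algebra consistently contracts onto every algebra attributed to it in (i)--(iii).
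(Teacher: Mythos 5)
Your overall strategy---linear deformations built from the Table 1 cocycles, integrability via $\left[\varphi,\varphi\right]=0$, Cartan's criterion plus the signature of the deformed Killing form to identify the real form---is essentially the paper's method, and your signature dictionary (negative definite for $\frak{su}(3)$, $\sigma=0$ for $\frak{su}(2,1)$, $\sigma=2$ for $\frak{sl}(3,\mathbb{R})$) matches the paper's computations. But your opening reduction is logically backwards, and this is a genuine gap. The theorem of \cite{LN} says that every contraction is reversed by \emph{some} deformation; it does not say that every deformation onto a simple algebra arises as the reverse of a contraction. The paper itself warns that ``a formal deformation is not necessarily related to a contraction'' \cite{We,C63}, and exhibits a concrete instance in dimension seven: the deformation $L_{7,3}^{2}+\epsilon_{1}\varphi_{1}\simeq L_{7,3}^{2+\epsilon_{1}}$ is non-invertible because a contraction must increase $\dim Der$. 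So exhibiting a semisimple $\frak{g}_{t}$ does not suffice; invertibility must be verified case by case, which the paper does by writing down explicit families of maps (e.g.\ $f_{t}(X_{i})=X_{i}$ for $i=1,2,3$, $f_{t}(X_{i})=t^{-1}X_{i}$ for $i=4,\dots,7$, $f_{t}(X_{8})=t^{-2}X_{8}$ in the case of $L_{8,2}$) and checking that the limit (\ref{Ko}) reproduces the original brackets. Your closing sentence, ``Reversing each deformation then yields the claimed contractions,'' assumes exactly what must be proved.

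Second, your one-step scheme fails outright for $L_{8,14}$, and no amount of ``careful bookkeeping'' of Killing signatures will rescue it, because no integrable linear deformation of $L_{8,14}$ is semisimple. The integrability conditions are $\varepsilon_{1}\varepsilon_{3}=\varepsilon_{2}\varepsilon_{3}=0$; the deformation $L_{8,14}(\varepsilon_{1},\varepsilon_{2},0)$ has $\mathcal{N}=0$ whenever $\varepsilon_{1}+\varepsilon_{2}\neq0$ (so by item (iii) of Proposition 2 it cannot lead toward a rank-two simple algebra), it is isomorphic to $L_{8,17}^{-1}$ when $\varepsilon_{2}=-\varepsilon_{1}$, and $L_{8,14}(0,0,\varepsilon_{3})\simeq L_{8,15}$ --- all non-semisimple. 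The paper's route is necessarily two-step: it deforms $L_{8,14}$ onto the intermediate algebras $L_{8,17}^{-1}$ and $L_{8,15}$, constructs explicit contractions of those back onto $L_{8,14}$ (after a change of basis $X_{4}^{\prime}=2X_{4}$, $X_{6}^{\prime}=X_{4}+X_{6}$, etc.\ in the $L_{8,17}^{-1}$ case), and only then obtains $\frak{sl}(3,\mathbb{R})\rightarrow L_{8,14}$ and $\frak{su}(2,1)\rightarrow L_{8,14}$ by \emph{transitivity} of contractions \cite{We}. Without this detour, the entries for $L_{8,14}$ in items (ii) and (iii) of the theorem remain unproved under your plan; for the other eight algebras your proposal coincides in substance with the paper's proof, modulo the missing invertibility verifications noted above.
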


\begin{proof}
We prove the assertion by direct analysis of the deformations of
the preceding algebras.
\begin{enumerate}

\item Let $L_{8,2}(\epsilon)=L_{8,2}+\epsilon\varphi$ be a linear
deformation of $L_{8,2}$. For any value of $\epsilon$ the deformed
commutator satisfies the Jacobi identity, thus defines a Lie
algebra. Computing the Killing metric tensor over the basis
$\left\{X_{1},..,X_{8}\right\}$, we obtain the matrix
\begin{equation*}
\kappa=\left(\begin{array}[c]{cccccccc}
 -3 & 0 & 0 & 0 & 0 & 0 & 0 & 0\\
 0 & -2 & 0 & 0 & 0 & 0 & 0 & \frac{3}{2}\epsilon\\
 0 & 0 & -3 & 0 & 0 & 0 & 0 & 0 \\
 0 & 0 & 0 & -6\epsilon & 0 & 0 & 0 & 0\\
 0 & 0 & 0 & 0 & -6\epsilon & 0 & 0 & 0\\
 0 & 0 & 0 & 0 & 0 & -6\epsilon & 0 & 0\\
0 & 0 & 0 & 0 & 0 & 0 & -6\epsilon & 0\\
0 & \frac{3}{2}\epsilon & 0 & 0 & 0 & 0 & 0 & -9\epsilon^{2}
\end{array}\right)
\end{equation*}
We have $\det(\kappa)= 2^{2}3^{8}7\epsilon^{6}\neq 0$ for
$\epsilon\neq 0$, and therefore the deformation is semisimple. To
identify to which real form $\frak{g}_{\epsilon}$ is isomorphic,
we compute the spectrum of $\kappa$ and obtain
\begin{equation}
{\rm Spec}(\kappa)=\left\{-3,3,(-6\epsilon)^{3},
-\frac{9}{2}\epsilon^{2}-1\pm
\frac{1}{2}\sqrt{\left(9\epsilon^{2}-\frac{3}{2}\right)+\frac{7}{4}}\right\}.
\end{equation}
Since $9\epsilon^{2}+2>
\sqrt{\left(9\epsilon^{2}-\frac{3}{2}\right)+\frac{7}{4}}$ for any
$\epsilon$, the two last roots of Spec$(\kappa)$ are always
negative, and the signature $\sigma$ of $\kappa$ is given by
\begin{equation*}
\sigma\left(  \kappa\right)  =\left\{
\begin{array}
[c]{rl}%
-8, & \varepsilon>0\\
0, & \varepsilon<0
\end{array}
\right.  .
\end{equation*}
For $\sigma=-8$ we obtain the compact Lie algebra $\frak{su}(3)$,
while for $\sigma=0$ we get the pseudo-unitary algebra
$\frak{su}(2,1)$ \cite{Co}. Finally, starting from the deformed
bracket, applying formula (\ref{Ko}) to the family of linear maps
defined by $f_{t}(X_{i})=X_{i}, (i=1,2,3)$,
$f_{t}(X_{i})=t^{-1}X_{i}, (i=4,...,7)$ and
$f_{t}(X_{8})=t^{-2}X_{8}$, we obtain the contraction of
$\frak{su}(2,1)$ onto $L_{8,2}$.

\item Let
$L_{8,4}^{0}(\epsilon_{1},\epsilon_{2})=L_{8,4}^{0}+\epsilon_{1}\varphi_{1}+\epsilon_{2}\varphi_{2}$
be a deformation. In this case the integrability condition is
$\epsilon_{1}\epsilon_{2}=0$. It is straightforward to verify that
the linear deformation
$L_{8,4}^{0}(0,\epsilon_{2})=L_{8,4}^{0}+\epsilon_{2}\varphi_{2}$
has a codimension one derived ideal, and cannot be
semisimple.\footnote{Actually this deformation leads to the Lie
algebra $L_{8,4}^{\epsilon_{2}}$. Since the latter algebra has no
invariants, it cannot be further deformed onto a semisimple
algebra.} Considering $L_{8,4}^{0}(\epsilon_{1},0)$ and computing
the spectrum of the Killing tensor $\kappa$, we obtain
\begin{equation*}
{\rm
Spec}(\kappa)=\left\{(-3)^{3},-4,(-6\epsilon_{1})^{4}\right\}.
\end{equation*}
Thus
\begin{equation*}
\sigma\left(  \kappa\right)  =\left\{
\begin{array}
[c]{rl}%
-8, & \epsilon_{1}>0\\
0, & \epsilon_{1}<0
\end{array}
\right.  ,
\end{equation*}
and we again obtain that
$L_{8,4}^{0}+\epsilon_{1}\varphi_{1}\simeq \frak{su}(3)$ if
$\epsilon_{1}>0$ and $L_{8,4}^{0}+\epsilon_{1}\varphi_{1}\simeq
\frak{su}(2,1)$ if $\epsilon_{1}<0$. Defining on
$L_{8,4}^{0}(\epsilon_{1},0)$ the linear maps
\begin{equation*}
f_{t}(X_{i})=X_{i}, (i=1,2,3,8);\; f_{t}(X_{i})=t^{-1}X_{i},
(i=4,...,7),
\end{equation*}
it follows that the contraction defined by them for $t\rightarrow
\infty$ is isomorphic to $L_{8,4}^{0}$, showing the invertibility
of the deformations.

\item For $L_{8,5}(\epsilon)=L_{8,5}+\epsilon\varphi$, the
spectrum of $\kappa$ is given by
\begin{equation*}
{\rm
Spec}(\kappa)=\left\{(-12)^{2},-8,-4\epsilon,-6\epsilon,(-24\epsilon)^{2},-72\epsilon\right\},
\end{equation*}
thus $\sigma(\kappa)=-8$ if $\epsilon>0$ and $\sigma(\kappa)=2$ if
$\epsilon<0$. This proves that $L_{8,5}(\epsilon)\simeq
\frak{su}(3)$ if $\epsilon>0$ and $L_{8,5}(\epsilon)\simeq
\frak{sl}(3,\mathbb{R})$ otherwise. The contraction reversing the
deformations are defined by the transformations
\begin{equation*}
f_{t}(X_{i})=X_{i}, (i=1,2,3);\; f_{t}(X_{i})=t^{-1}X_{i},
(i=4,...,8).
\end{equation*}

\item  For $L_{8,13}^{\varepsilon}$ we consider the deformations
$L_{8,13}^{\varepsilon}\left(
\mu\right)=L_{8,13}^{\epsilon}+\mu\varphi $, where
$\epsilon=\pm1$. For any nonzero $\mu$, we obtain
\begin{equation*}
{\rm Spec}\left(  \kappa\right)  =\left\{  -6,6,12,\left(
-12\mu\right) ^{2},\left( 12\mu\right)
^{2},-36\epsilon\mu^{2}\right\} .
\end{equation*}
The signature is $\sigma\left( \kappa\right)  =0$ for $\epsilon=1$
and $2$ for $\epsilon=-1$, proving that $L_{8,13}^{1}\left(
\mu\right) $ is isomorphic to $\frak{su}\left(  2,1\right)  $ and
$L_{8,13}^{-1}(\mu)$ is isomorphic to $\frak{sl}(3,\mathbb{R})$.
In both cases, the contractions are obtained from the changes of
basis in $L_{8,13}^{\epsilon}(\mu)$ defined by
\begin{equation*}
\fl f_{t}(X_{i})=X_{i}, (i=1,2,3);\; f_{t}(X_{i})=t^{-1}X_{i},
(i=4,...,7);\; f_{t}(X_{8})=t^{-2}X_{8}.
\end{equation*}

\item  Let $L_{8,15}\left(  \varepsilon\right)
=L_{8,15}+\varepsilon\varphi$ be a formal deformation. The
computation of the Killing form gives $\det\left(  \kappa\right)
=2^{14}3^{8}\varepsilon^{5}\neq0$ for nonzero $\varepsilon$, and
the spectrum is given by
\begin{equation*}
{\rm Spec}\left(  \kappa\right)  =\left\{  -6,6,12,\left(
-12\varepsilon\right) ^{3},\left(  12\varepsilon\right)
^{2}\right\}  ,
\end{equation*}
thus $\sigma\left(  \kappa\right)  =0$ for positive $\varepsilon$
and $\sigma\left(  \kappa\right)  =2$ for $\varepsilon<0$. We
obtain the deformations
\begin{equation*}
L_{8,15}+\varepsilon\varphi\simeq\left\{
\begin{array}
[c]{cc}%
\frak{su}\left(  2,1\right)  , & \varepsilon>0\\
\frak{sl}\left(  3,\mathbb{R}\right)  , & \varepsilon<0
\end{array}
\right.  .
\end{equation*}
The deformations are reversed considering the linear maps
\begin{equation*}
\fl f_{t}(X_{i})=X_{i}, (i=1,2,3);\; f_{t}(X_{i})=t^{-3}X_{i},
(i=4,5);\; f_{t}(X_{i})=t^{-1}X_{i}, (i=6,7);\;
f_{t}(X_{8})=t^{-2}X_{8}.
\end{equation*}

\item  The deformations of $L_{8,17}^{-1}$ were already considered
in \cite{C63}. Here we only note that
$L_{8,17}^{-1}+\varepsilon_{1}\varphi_{1}$ is semisimple, and that
the spectrum of $\kappa$ equals $\sigma\left( \kappa\right)  =2$
for any nonzero values of $\varepsilon_{1}$. We therefore obtain
the  deformation $L_{8,17}^{-1}+\varepsilon_{1}\varphi
_{1}\simeq\frak{sl}\left(  3,\mathbb{R}\right)  $, the
corresponding contraction being determined by
\begin{equation*}
f_{t}(X_{i})=X_{i}, (i=1,2,3,8);\; f_{t}(X_{i})=t^{-1}X_{i},
(i=4,...,7).
\end{equation*}

\item  The integrability condition for the deformation $L_{8,18}%
^{0}+\varepsilon_{1}\varphi_{1}+\varepsilon_{2}\varphi_{2}$ is
$\varepsilon _{1}\varepsilon_{2}=0$. Since the deformation
$L_{8,18}^{0}+\varepsilon _{2}\varphi_{2}$ leads to a Lie algebra
having no invariants, it cannot
provide any contraction of a semisimple algebra. Considering $L_{8,18}%
^{0}+\varepsilon_{1}\varphi_{1}$ and computing the Killing tensor,
we obtain $\det\left(  \kappa\right)
=2^{14}3^{7}\varepsilon_{1}^{4}\neq0$ and the spectrum
\begin{equation*}
{\rm Spec}\left(  \kappa\right)  =\left\{  -6,-4,6,12,\left(
-12\varepsilon _{1}\right)  ^{2},\left(  12\varepsilon_{1}\right)
^{2}\right\}  ,
\end{equation*}
and in any case $\sigma\left(  \kappa\right)  =0$, showing that $L_{8,18}%
^{0}+\varepsilon_{1}\varphi_{1}\simeq\frak{su}\left(  2,1\right)
$. To obtain the contraction, we consider on the deformations the
transformations
\begin{equation*}
f_{t}(X_{i})=X_{i}, (i=1,2,3,8);\; f_{t}(X_{i})=t^{-1}X_{i},
(i=4,...,7).
\end{equation*}

\item  Since the Lie algebra $L_{8,21}$ has the Levi decomposition
$\frak{sl}\left(  2,\mathbb{R}\right)
\overrightarrow{\oplus}_{D_{2}}5L_{1}$ and $\dim H^{2}\left(
L_{8,21},L_{8,21}\right)  \neq0$, it follows from \cite{Ri} that
$L_{8,21}$ is the contraction of a semisimple Lie algebra.
Considering $L_{8,21}+\varepsilon\varphi$, the spectrum of
$\kappa$ is given by
\begin{equation*}
{\rm Spec}\left(  \kappa\right)  =\left\{
-24,24,48,-48\varepsilon,-12\varepsilon
,12\varepsilon,48\varepsilon,72\varepsilon\right\}
\end{equation*}
and $\sigma\left(  \kappa\right)  =2$ for
$\varepsilon>0,\;\sigma\left( \kappa\right)  =0$ for
$\varepsilon<0$. We thus obtain that
\begin{equation*}
L_{8,21}+\varepsilon\varphi\simeq\left\{
\begin{array}
[c]{cc}%
\frak{su}\left(  2,1\right)  , & \varepsilon<0\\
\frak{sl}\left(  3,\mathbb{R}\right)  , & \varepsilon>0
\end{array}
\right.  .
\end{equation*}
In both cases, the contractions follows at once from the linear
maps
\begin{equation*}
f_{t}(X_{i})=X_{i}, (i=1,2,3);\; f_{t}(X_{i})=t^{-1}X_{i},
(i=4,...,8).
\end{equation*}
\end{enumerate}
To finish the proof, we still have to see that $L_{8,14}$ is also
a contraction of the non-compact simple algebras $\frak{su}(2,1)$
and $\frak{sl}(3,\mathbb{R})$. The Lie algebra $L_{8,14}$ has
three cocycle classes. Considering a
formal deformation $L_{8,14}\left(  \varepsilon_{1},\varepsilon_{2}%
,\varepsilon_{3}\right)
=L_{8,14}+\varepsilon_{1}\varphi_{1}+\varepsilon
_{2}\varphi_{2}+\varepsilon_{3}\varphi_{3}$, we obtain the
integrability conditions
\begin{equation*}
\varepsilon_{1}\varepsilon_{3}=\varepsilon_{2}\varepsilon_{3}=0.
\end{equation*}
A simple calculation shows that $L_{8,14}\left(  \varepsilon_{1}%
,\varepsilon_{2},0\right)  $ satisfies $\mathcal{N}\left(
L_{8,14}\left( \varepsilon_{1},\varepsilon_{2},0\right)  \right)
=0$ whenever $\varepsilon _{1}+\varepsilon_{2}\neq0$, and
therefore no simple algebra can be reached. If
$\varepsilon_{2}=-\varepsilon_{1}$, the deformation is isomorphic
to $L_{8,17}^{-1}$. Consider in $L_{8,14}\left(
\varepsilon_{1},-\varepsilon _{1},0\right)  \simeq L_{8,17}^{-1}$
the following change of basis
\begin{equation*}
\fl X_{4}^{\prime}=2X_{4},\;X_{5}^{\prime}=2X_{5},\;X_{6}^{\prime}=X_{4}%
+X_{6},\;X_{7}^{\prime}=X_{5}+X_{7},
\end{equation*}
where the remaining generators are not changed. This change is
easily seen to preserve the action of $\frak{sl}\left(
2,\mathbb{R}\right)  $ over the
radical. The only modified brackets are%
\begin{equation*}
\left[  X_{4}^{\prime},X_{8}^{\prime}\right]
=X_{4}^{\prime},\;\left[ X_{5}^{\prime},X_{8}^{\prime}\right]
=X_{5}^{\prime},\;\left[  X_{6}^{\prime },X_{8}^{\prime}\right]
=X_{4}^{\prime}-X_{6}^{\prime},\;\left[
X_{7}^{\prime},X_{8}^{\prime}\right]
=X_{5}^{\prime}-X_{7}^{\prime}.
\end{equation*}
If we now define
\begin{equation*}
\fl f_{t}\left(  X_{i}^{\prime}\right)
=X_{i}^{\prime},\;i=1,2,3;\;f_{t}\left( X_{i}^{\prime}\right)
=\frac{1}{t^{3}}X_{i}^{\prime},\;i=4,5;\;f_{t}\left(
X_{i}^{\prime}\right)
=\frac{1}{t}X_{i}^{\prime},\;i=6,7;\;f_{t}\left(
X_{8}^{\prime}\right)  =\frac{1}{t^{2}}X_{8}^{\prime},
\end{equation*}
then, over the new transformed basis $\left\{  X_{i}^{\prime\prime}%
=f_{t}\left(  X_{i}^{\prime}\right)  \right\}  $, the preceding
brackets are expressed by
\begin{equation*}
\fl \left[  X_{4}^{\prime\prime},X_{8}^{\prime\prime}\right]  =\frac{1}{t^{2}%
}X_{4}^{\prime\prime},\;\left[
X_{5}^{\prime\prime},X_{8}^{\prime\prime }\right]
=\frac{1}{t^{2}}X_{5}^{\prime\prime},\;\left[  X_{6}^{\prime\prime
},X_{8}^{\prime\prime}\right]  =X_{4}^{\prime\prime}-\frac{1}{t^{2}}%
X_{6}^{\prime\prime},\;\left[
X_{7}^{\prime\prime},X_{8}^{\prime\prime }\right]
=X_{5}^{\prime\prime}-\frac{1}{t^{2}}X_{7}^{\prime}.
\end{equation*}
For $t\rightarrow\infty$ we recover the brackets of $L_{8,14}$,
which shows that the deformation  $L_{8,14}\left(
\varepsilon_{1},-\varepsilon _{1},0\right)  $ is invertible. The
contraction
\begin{equation}
\frak{sl}\left(  3,\mathbb{R}\right)  \longrightarrow L_{8,14}
\label{K1}
\end{equation}
follows from transitivity of contractions \cite{We}. \newline
Finally, if we consider the deformation $L_{8,14}\left(
0,0,\varepsilon_{3}\right)  $ and the change of basis
$X_{8}^{\prime}=\epsilon_{3}X_{8}$, we immediately obtain that
this deformation is isomorphic to $L_{8,15}$. The corresponding
contraction of $L_{8,14}\left( 0,0,\varepsilon_{3}\right)  $ onto
$L_{8,14}$ is given by the family of transformations
\begin{equation*}
f_{t}(X_{i})=X_{i},\; (i=1,2,3,8);\quad
f_{t}(X_{i})=t^{-1}X_{i},\; (i=4,5,6,7).
\end{equation*}
Again, by transitivity of contractions, we obtain the previous
contraction (\ref{K1}) and also
\begin{equation}
\frak{su}\left(  2,1\right)  \longrightarrow L_{8,14}.
\end{equation}
\end{proof}

It remains to obtain the contractions on decomposable Lie algebras
$\frak{g}=\frak{g}_{1}\oplus\frak{g}_{2}$ with nonzero Levi part.
By corollary 1, none of the ideals $\frak{g}_{i}$ can be
semisimple. We can therefore assume that $\frak{g}_{1}$ has the
form
$\frak{g}_{1}=\frak{s}^{\prime}\overrightarrow{\oplus}_{R}\frak{r}$,
where $\frak{s}^{\prime}$ is simple of rank one and $5\leq
\dim\frak{g}_{1}\leq 7$. Then $\frak{g}$ can be rewritten as
\begin{equation*}
\frak{g}=\frak{s}^{\prime}\overrightarrow{\oplus}_{R\oplus
kD_{0}}\left(\frak{r}\oplus \frak{g}_{2}\right),
\end{equation*}
where $k=1,2,3$. Further, the embedding
$\frak{s}^{\prime}\hookrightarrow \frak{s}$ induces the branching
rule
\begin{equation}
ad(\frak{s})|_{\frak{s}^{\prime}}=ad(\frak{s}^{\prime})\oplus
R\oplus kD_{0}.
\end{equation}
By proposition 6, the multiplicity of the trivial representation
$D_{0}$ is at most one, from which it follows at once that the
only possibilities are $k=1, R=R_{4}^{II}$ if
$\frak{s}^{\prime}\simeq \frak{so}(3)$ and $k=1,
R=2D_{\frac{1}{2}}$ if $\frak{s}^{\prime}\simeq
\frak{sl}(2,\mathbb{R})$. This means that the simple algebras
$\frak{su}(3),\; \frak{su}(2,1)$ and $\frak{sl}(3,\mathbb{R})$ can
only contract onto the algebras $L_{7,2}\oplus L_{1}$ and
$L_{7,7}\oplus L_{1}$.

\begin{proposition}
The Lie algebra $L_{7,2}\oplus L_{1}$ is a contraction of
$\frak{su}(3)$ and $\frak{su}(2,1)$, while $L_{7,7}\oplus L_{1}$
is a contraction of $\frak{su}(2,1)$ and
$\frak{sl}(3,\mathbb{R})$.
\end{proposition}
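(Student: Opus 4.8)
The plan is to avoid a fresh cohomology computation and instead derive all four contractions from those already established in Theorem~1 by transitivity. First I would pin down the two decomposable targets. Because the radical of $L_{7,2}$ carries the irreducible $\frak{so}(3)$-module $R_{4}^{II}$ and $\Lambda^{2}R_{4}^{II}\simeq R_{5}\oplus D_{0}$ contains no copy of $R_{4}^{II}$, no equivariant bracket can take values in the radical itself, so the radical is abelian and $L_{7,2}\oplus L_{1}=\frak{so}(3)\overrightarrow{\oplus}_{R_{4}^{II}\oplus D_{0}}5L_{1}$ with $X_{8}$ central. The analogous count $\Lambda^{2}(2D_{\frac{1}{2}})\simeq D_{1}\oplus 3D_{0}$ shows that $L_{7,7}\oplus L_{1}=\frak{sl}(2,\mathbb{R})\overrightarrow{\oplus}_{2D_{\frac{1}{2}}\oplus D_{0}}5L_{1}$ is abelian as well. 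These two algebras realise the same Levi action as the indecomposable algebras $L_{8,2}$ and $L_{8,14}$ of Table~1, and the key observation is that they are decoupling limits of the latter.

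The internal structure of the two parents is different, so the two decoupling contractions are not the same. In $L_{8,2}$ the generator $X_{8}$ is central but lies in the derived algebra, appearing as the value of the Heisenberg-type brackets $[X_{i},X_{j}]=c_{ij}X_{8}$ for $i,j\in\{4,\dots,7\}$, the only bracket into the $D_{0}$-summand permitted by $\Lambda^{2}R_{4}^{II}\supset D_{0}$. I would kill these by the change of basis $f_{t}(X_{a})=X_{a}$ for $a\in\{1,2,3,8\}$ and $f_{t}(X_{i})=t^{-1}X_{i}$ for $i\in\{4,\dots,7\}$: under (\ref{Ko}) each $[X_{i},X_{j}]$ acquires a factor $t^{-2}$ and vanishes, while the $R_{4}^{II}$-action is preserved, so the limit is precisely $L_{7,2}\oplus L_{1}$. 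In $L_{8,14}$, by contrast, the radical is already abelian and $X_{8}$ acts as a nonzero derivation; here I would instead take $f_{t}(X_{i})=X_{i}$ for $i\in\{1,\dots,7\}$ and $f_{t}(X_{8})=t^{-1}X_{8}$, so that every $[X_{8},X_{i}]$ is scaled by $t^{-1}$ and dies in the limit, yielding $L_{7,7}\oplus L_{1}$. In both cases the limit manifestly exists, since the only modified brackets tend to zero and none can diverge.

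The proof then concludes by transitivity of contractions \cite{We}. By Theorem~1, $\frak{su}(3)$ and $\frak{su}(2,1)$ contract onto $L_{8,2}$, and $\frak{su}(2,1)$ and $\frak{sl}(3,\mathbb{R})$ contract onto $L_{8,14}$; composing with the two decoupling contractions above gives exactly the asserted statements. As a consistency check one could instead argue directly as in Theorem~1, computing $H^{2}$ of each target through the reduction (\ref{HS}), deforming along the cocycle that reintroduces the radical brackets valued in the Levi part, and reading off the real form from the signature of the Killing tensor.

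The step I expect to be the main obstacle is the verification that each decoupling limit both exists and reproduces Turkowski's algebra exactly. This hinges on the precise role of $X_{8}$ in each parent --- central and derived in $L_{8,2}$ versus acting by derivations in $L_{8,14}$ --- which dictates which generators must be rescaled; choosing the wrong grading would force a structure constant to blow up rather than vanish. I would therefore check both contractions against the explicit structure constants of $L_{8,2}$, $L_{8,14}$, $L_{7,2}$ and $L_{7,7}$ in Table~A1 before invoking transitivity.
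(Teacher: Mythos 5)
Your proposal is correct, and in fact the scaling maps you chose are exactly those in the paper: for $L_{8,2}\rightarrow L_{7,2}\oplus L_{1}$ the paper uses $f_{1,t}(X_{i})=X_{i}$ $(i=1,2,3,8)$, $f_{1,t}(X_{i})=t^{-1}X_{i}$ $(i=4,\dots,7)$, and for $L_{8,14}\rightarrow L_{7,7}\oplus L_{1}$ it uses $f_{2,t}(X_{i})=X_{i}$ $(i=1,\dots,7)$, $f_{2,t}(X_{8})=t^{-1}X_{8}$, followed in both cases by the same appeal to transitivity via Theorem 1. Where you differ is in the scaffolding and the direction of discovery: the paper works from the targets upward, first computing $\dim H^{2}(L_{7,2}\oplus L_{1})=\dim H^{2}(L_{7,7}\oplus L_{1})=7$ via the Hochschild--Serre reduction, exhibiting explicit cocycles, and identifying the linear deformations as $L(\epsilon_{1})\simeq L_{8,2}$ and $L(\epsilon_{2})\simeq L_{8,4}^{0}$ (and, for $L_{7,7}\oplus L_{1}$, a five-fold list $L_{8,13}^{\varepsilon}$, $L_{8,14}$, $L_{8,15}$, $L_{8,17}^{-1}$, $L_{8,18}^{0}$) before verifying invertibility; you skip the cohomology entirely, check the decoupling limits directly against the structure constants of Table A1, and use a single parent per target. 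That shortcut is logically sufficient for the proposition, since Theorem 1 already gives $\frak{su}(3),\frak{su}(2,1)\rightarrow L_{8,2}$ and $\frak{su}(2,1),\frak{sl}(3,\mathbb{R})\rightarrow L_{8,14}$; what the paper's longer route buys is the full deformation picture of the two decomposable algebras, which feeds the complete contraction diagram of Figure 1 but is not needed for the bare statement. Your verification that the limits exist is also sound: in $L_{8,2}$ the only brackets touched scale as $t^{-2}$ (the Heisenberg-type brackets into the central $X_{8}$) and in $L_{8,14}$ as $t^{-1}$ (the derivation action $[X_{6},X_{8}]=X_{4}$, $[X_{7},X_{8}]=X_{5}$), while the Levi brackets and the Levi action on the radical carry net weight zero, so nothing diverges. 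The representation-theoretic preliminaries ($\Lambda^{2}R_{4}^{II}\simeq R_{5}\oplus D_{0}$ and $\Lambda^{2}(2D_{\frac{1}{2}})\simeq D_{1}\oplus 3D_{0}$, with equivariance of the radical bracket coming from the Jacobi identity) are correct but redundant, since Turkowski's tables already display the abelian radicals; they do serve as a useful consistency check on which $D_{0}$-valued brackets can occur in the parents.
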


\begin{proof}
We prove the assertion for $L_{7,2}\oplus L_{1}$, the reasoning
being similar for the remaining case. With some effort it can be
proved that $\dim H^{2}(L_{7,2}\oplus L_{1},L_{7,2}\oplus
L_{1})=7$. Considering the cocycles defined by
\begin{eqnarray*}
\fl \varphi_{1}(X_{4},X_{5})=X_{8},\;
\varphi_{1}(X_{6},X_{7})=-X_{8}\\
\fl \varphi_{2}(X_{4},X_{8})=-X_{6},\;
\varphi_{2}(X_{5},X_{8})=-X_{7},\;
\varphi_{2}(X_{6},X_{8})=X_{4},\; \varphi_{2}(X_{7},X_{8})=X_{5}
\end{eqnarray*}
and the corresponding linear deformations
$L(\epsilon_{1})=(L_{7,2}\oplus L_{1})+\epsilon_{1}\varphi_{1}$,
$L(\epsilon_{2})=(L_{7,2}\oplus L_{1})+\epsilon_{2}\varphi_{2}$,
it is not difficult to verify, using Table A1, that following
isomorphisms hold:
\begin{equation}
L(\epsilon_{1})\simeq L_{8,2},\quad L(\epsilon_{2})\simeq
L_{8,4}^{0}.
\end{equation}
Using the structure constants of Table A1, we define the changes
of basis
\begin{equation*}
f_{1,t}(X_{i})=X_{i},\;(i=1,2,3,8);\quad
f_{1,t}(X_{i})=t^{-1}X_{i},\;(i=4,5,6,7)
\end{equation*}
on $L(\epsilon_{1})$ and
\begin{equation*}
f_{2,t}(X_{i})=X_{i},\;(i=1,...,7);\quad
f_{2,t}(X_{8})=t^{-1}X_{8}
\end{equation*}
on $L(\epsilon_{2})$. A simple computation shows that the brackets
\begin{equation}
\left[X,Y\right]^{\prime}=\lim_{t\rightarrow
\infty}f_{k,t}^{-1}\left[f_{k,t}(X),f_{k,t}(Y)\right],\quad k=1,2
\end{equation}
are exactly those of $L_{7,2}\oplus L_{1}$. Since $\frak{su}(3)$
and $\frak{su}(2,1)$ both contract onto $L_{8,2}$ and
$L_{8,4}^{0}$, the result follows from transitivity of
contractions.

For $L_{7,7}\oplus L_{1}$ we also find that $\dim
H^{2}(L_{7,7}\oplus L_{1},L_{7,7}\oplus L_{1})=7$.
Considering the nontrivial cocycles%
\begin{eqnarray*}
\fl \varphi_{1}\left(  X_{4},X_{5}\right)  =X_{8};\quad
\varphi_{2}\left( X_{6},X_{7}\right)  =X_{8};\quad
\varphi_{3}\left( X_{6},X_{8}\right) =X_{4};\quad
\varphi_{3}\left(  X_{7},X_{8}\right)  =X_{5};\\
\fl \varphi_{4}\left(  X_{4},X_{8}\right)  =X_{4};\quad
\varphi_{4}\left( X_{5},X_{8}\right)  =X_{5};\quad
\varphi_{5}\left( X_{6},X_{8}\right) =X_{6};\quad
\varphi_{5}\left(  X_{7},X_{8}\right)  =X_{7};\\
\fl \varphi_{6}\left(  X_{4},X_{8}\right)  =X_{6};\quad
\varphi_{6}\left( X_{5},X_{8}\right)  =X_{7}.
\end{eqnarray*}
we obtain the following isomorphisms

\begin{enumerate}
\item $\frak{g}_{1}=\left(  L_{7,7}\oplus L_{1}\right)
+\varepsilon
_{1}\varphi_{1}+\varepsilon_{2}\varphi_{2}\simeq L_{13}^{\varepsilon}%
,$\newline $\left[  f_{1,t}\left(  X_{i}\right)  =X_{i},\;\left(
i=1,2,3,8\right)  ;\;f_{1,t}\left(  X_{i}\right)
=t^{-1}X_{i},\;\left( i=4,...,7\right)  \right]  \;$

\item $\frak{g}_{2}=\left(  L_{7,7}\oplus L_{1}\right)
+\varepsilon \varphi_{3}\simeq L_{8,14},\;$\newline $\left[
f_{2,t}\left(  X_{i}\right) =X_{i},\;\left(  i=1,...,7\right)
;\;f_{2,t}\left(  X_{8}\right) =t^{-1}X_{8}\right]  $

\item $\frak{g}_{3}=\left(  L_{7,7}\oplus L_{1}\right)
+\varepsilon _{1}\varphi_{2}+\varepsilon_{2}\varphi_{3}\simeq
L_{8,15},\;$\newline $\left[
f_{3,t}\left(  X_{i}\right)  =X_{i},\;\left(  i=1,...,5\right)  ;\;f_{3,t}%
\left(  X_{i}\right)  =t^{-1}X_{i},\;\left(  i=6,7,8\right)
\right]  $

\item $\frak{g}_{4}=\left(  L_{7,7}\oplus L_{1}\right)
+\varepsilon\left( \varphi_{4}-\varphi_{5}\right)  \simeq
L_{8,17}^{-1},\;$\newline $\left[
f_{4,t}\left(  X_{i}\right)  =X_{i},\;\left(  i=1,...,7\right)  ;\;f_{4,t}%
\left(  X_{8}\right)  =t^{-1}X_{8}\right]  $

\item $\frak{g}_{5}=\left(  L_{7,7}\oplus L_{1}\right)
+\varepsilon\left( \varphi_{5}-\varphi_{6}\right)  \simeq
L_{8,18}^{0},\;$\newline $\left[
f_{5,t}\left(  X_{i}\right)  =X_{i},\;\left(  i=1,...,7\right)  ;\;f_{5,t}%
\left(  X_{8}\right)  =t^{-1}X_{8}\right]  .$
\end{enumerate}
The linear maps in square brackets are defined over
$\frak{g}_{k}\;\left(
k=1,..,5\right)  $, and the limit%
\begin{equation*}
\left[  X,Y\right]
^{\prime}=\lim_{t\rightarrow\infty}f_{k,t}^{-1}\left[
f_{k,t}\left(  X\right)  ,f_{k,t}\left(  Y\right)  \right]
,\;X,Y\in \frak{g}_{k},\;k=1,..,5
\end{equation*}
exists for any pair of generators, thus define a contraction. It
can be easily verified that $\left[  X,Y\right]  ^{\prime}$
reproduces the brackets of $L_{7,7}\oplus L_{1}$. Again, the
contractions from the simple algebras $\frak{su}\left(  2,1\right)
$ and $\frak{sl}\left(  3,\mathbb{R}\right)  $ follow by
transitivity of contractions.
\end{proof}

\smallskip
In Figure 1 we display all the non-solvable contractions of
$\frak{su}(3)$, $\frak{su}(2,1)$ and $\frak{sl}(3,\mathbb{R})$
obtained in the previous results.

\begin{figure}
\caption{Non-solvable contractions of simple Lie algebras in dimension $8$.}%
\begin{diagram}
& & & & \frak{su}(3) & & & &\\
& & & \ldTo &   \dTo  & \rdTo(2,3) \rdTo & & &\\
& & L_{8,5} &  & L_{7,2}\oplus L_{1} & \lTo & L_{8,2} & &\\
& \ruTo &   &   &   & \luTo(2,1)  &  L_{8,4}^{0} &    \luTo &  \\
\frak{sl}(3,\mathbb{R}) & \rTo  & & & L_{8,15} &   &   &\lTo \luTo(2,1) & \frak{su}(2,1)\\
\dTo  & \rdTo \rdTo(2,4) \rdTo(4,2) \rdTo(5,4) & & & \dTo   & \rdTo(1,4) & & \ldTo(4,2)  \ldTo(6,4) \ldTo \ldTo(3,4) & \dTo \\
L_{8,13}^{-1} &  &  L_{8,17}^{-1} & \rTo & L_{8,14} & & L_{8,13}^{1} &  & L_{8,18}^{0}\\
&\rdTo(5,2)  &   & \rdTo(3,2) &   & \rdTo(1,2) \ldTo(1,2) &   & \ldTo(3,2) &\\
& & L_{8,21} & & & L_{7,7}\oplus L_{1} & & &\\
& & & & & & & &\\
\end{diagram}
\end{figure}

\section*{Concluding remarks}

We have determined all the non-solvable contractions of semisimple
Lie algebras up to dimension 8. Using the stability theorem of
Page and Richardson, we have obtained a first reduction of the
problem, and seen that the existence of contractions is determined
by the Levi decomposition of the target algebras. Moreover, it has
been pointed out that the embeddings of semisimple algebras in
other semisimple Lie algebras and the  associated branching
 rules are essential for the study of deformations and
 contractions in the non-solvable case, and show that decomposable
 and indecomposable algebras must be considered separately.
The next natural step of our analysis is to extend it to more
ample classes of target algebras, in order to
 determine the contractions of semisimple algebras onto
solvable Lie algebras. However, this problem can, in principle, be
solved only up to dimension six, since no classification of seven
dimensional solvable algebras is known. Further, the problem is
technically a formidable task, not only because of the large
number of isomorphism classes, but also because solvable algebras
can depend on many parameters, and therefore the deformations must
be analyzed for all possibilities of these parameters separately.
The recent work \cite{Po} shows the difficulties that appear even
in dimension four. Another possibility that is conceivable is to
compute all deformations and contractions among Lie algebras with
nontrivial Levi decomposition. In this sense, the only case having
been analyzed corresponds to the classical kinematical algebras
\cite{Ba}, corresponding to the representation of $\frak{so}(3)$
related to space isotropy. In the general problem, by the
Page-Richardson theorem, this task is reduced to analyze the
problem for Lie algebras having the same describing representation
$R$. While our analysis covers the dimensions six and seven, in
dimension 8 there are various parameterized families, and the
exact obtainment of all possible deformations (and contractions)
requires a large amount of special cases. Here the existence of
many non-invertible deformations makes the analysis quite
complicated. Work in this direction is actually in progress.

\medskip
Among the applications of the results obtained here, we enumerate
the missing label problem and the spontaneous symmetry breaking.
Especially for the case of semisimple algebras, the knowledge of
the contractions preserving some semisimple subalgebra is of
interest in many situations. A special case is given by
inhomogeneous Lie algebras \cite{He,C43}. However, other types of
semidirect products are relevant for many problems, such as the
Galilei, Schr\"odinger or the Poincar\'e-Heisenberg algebras, and
their deformations often provide additional information on the
states or the configuration of a system and their invariants
\cite{Ch,C46,Ca,Gui}. In the case of the missing label problem,
the contractions can be used to determine additional operators
that commute with the subalgebra \cite{C46}. Finally, the obtained
contractions could also be of interest in establishing relations
among completely integrable systems defined on contractions of
semisimple Lie algebras \cite{Bol}.

\section*{Acknowledgements}
The author wishes to express his gratitude to the referees for
useful suggestions that helped to improve the manuscript, as well
as for reference \cite{Gui}. This work was partially supported
 by the research project MTM2006-09152 of the Ministerio de Educaci\'on y
 Ciencia. The figures of the paper were prepared with the help of the LaTex
 package \textit{commutative diagrams} by Paul Taylor.

\section*{References}

\section*{Appendix}
\appendix
\setcounter{section}{1}

In this appendix we give the structure constants of Lie algebras
in dimension $n\leq 8$ having a nontrivial Levi decomposition,
following the notation of the original classification \cite{Tu}.
The brackets are expressed by
$\left[X_{i},X_{j}\right]=C_{ij}^{k}X_{k}$ over the ordered basis
$\left\{X_{1},..,X_{n}\right\}$ of $\frak{g}$.

\begin{table}
\begin{indented}\item[]
\caption{Structure constants for indecomposable Lie algebras with
nontrivial Levi decomposition in dimension $n\leq 8$ after
\cite{Tu}.}
\begin{tabular}{@{}ll}
Algebra & \textrm{Structure constants} \\ \hline\mr
$L_{5,1}$ & $C_{12}^{2}=2,C_{13}^{3}=-2,%
\;C_{23}^{1}=1,C_{14}^{4}=1,C_{25}^{4}=1,C_{34}^{5}=1,C_{15}^{5}=-1.$ \\
$L_{6,1}$ & $%
C_{23}^{1}=1,C_{12}^{3}=1,C_{13}^{2}=-1,C_{15}^{6}=1,C_{16}^{5}=-1,C_{24}^{6}=-1,
C_{26}^{4}=1,C_{34}^{5}=1,C_{35}^{4}=-1.
$ \\
$L_{6,2}$ & $%
C_{12}^{2}=2,C_{13}^{3}=-2,C_{23}^{1}=1,C_{14}^{4}=1,C_{25}^{4}=1,C_{34}^{5}=1,C_{15}^{5}=-1,C_{45}^{6}=1.
$ \\
$L_{6,3}$ & $%
C_{12}^{2}=2,C_{13}^{3}=-2,C_{23}^{1}=1,C_{14}^{4}=1,C_{25}^{4}=1,C_{34}^{5}=1,C_{15}^{5}=-1,C_{j6}^{j}=1,\;\left(
j=4,5\right) .
$ \\
$L_{6,4}$ & $%
C_{12}^{2}=2,C_{13}^{3}=-2,C_{23}^{1}=1,C_{14}^{4}=2,C_{16}^{6}=-2,C_{25}^{4}=2,C_{26}^{5}=1,C_{34}^{5}=1,C_{35}^{6}=2.
$ \\
$L_{7,1}$ & $C_{23}^{1}=1,C_{12}^{3}=1,C_{13}^{2}=-1,%
\;C_{15}^{6}=1,C_{16}^{5}=-1,C_{24}^{6}=-1,C_{26}^{4}=1,C_{34}^{5}=1,C_{35}^{4}=-1,
$ \\
& $C_{j7}^{j}=1\;\left( 4\leq j\leq 6\right) .$ \\
$L_{7,2}$ & $C_{23}^{1}=1,C_{12}^{3}=1,C_{13}^{2}=-1,\;C_{14}^{7}=\frac{1}{2}%
,C_{15}^{6}=\frac{1}{2},C_{16}^{5}=-\frac{1}{2},C_{17}^{4}=-\frac{1}{2}%
,C_{24}^{5}=\frac{1}{2},$ \\
& $C_{25}^{4}=-\frac{1}{2},C_{26}^{7}=\frac{1}{2},C_{27}^{6}=-\frac{1}{2}%
,C_{34}^{6}=\frac{1}{2},C_{35}^{7}=-\frac{1}{2},C_{36}^{4}=-\frac{1}{2}%
,C_{37}^{5}=\frac{1}{2}.$ \\
$L_{7,3}$ & $%
C_{12}^{2}=2,C_{13}^{3}=-2,C_{23}^{1}=1,C_{14}^{4}=1,C_{15}^{5}=-1,C_{25}^{4}=1,\
C_{34}^{5}=1,C_{47}^{4}=1,C_{57}^{5}=1,
$ \\
& $C_{67}^{6}=p\;\left( p\neq 0\right) .$ \\
$L_{7,4}$ & $%
C_{12}^{2}=2,C_{13}^{3}=-2,C_{23}^{1}=1,C_{14}^{4}=1,C_{15}^{5}=-1,C_{25}^{4}=1,C_{34}^{5}=1,C_{45}^{6}=1,C_{47}^{4}=1,
$ \\
& $C_{57}^{5}=1,C_{67}^{6}=2.$ \\
$L_{7,5}$ & $%
C_{12}^{2}=2,C_{13}^{3}=-2,C_{23}^{1}=1,C_{14}^{4}=2,C_{16}^{6}=-2,C_{25}^{4}=2,C_{26}^{5}=1,\;C_{34}^{4}=1,C_{35}^{5}=2,
$ \\
& $C_{j7}^{j}=1\;\left( j=,4,5,6\right) .$ \\
$L_{7,6}$ & $C_{12}^{2}=2,C_{13}^{3}=-2,C_{23}^{1}=1,%
\;C_{14}^{4}=3,C_{15}^{5}=1,C_{16}^{6}=-1,C_{17}^{7}=-3,C_{25}^{4}=3,%
\;C_{26}^{5}=2,$ \\
& $C_{27}^{6}=1,C_{34}^{5}=1,C_{35}^{6}=2,C_{36}^{7}=3.$ \\
$L_{7,7}$ & $C_{12}^{2}=2,C_{13}^{3}=-2,C_{23}^{1}=1,%
\;C_{14}^{4}=1,C_{15}^{5}=-1,%
\;C_{25}^{4}=1,C_{27}^{6}=1,C_{34}^{5}=1,C_{16}^{6}=1,$ \\
& $C_{17}^{7}=-1,C_{36}^{7}=1.$ \\
$L_{8,1}$ & $C_{23}^{1}=1,C_{12}^{3}=1,C_{13}^{2}=-1,%
\;C_{15}^{6}=1,C_{16}^{5}=-1,C_{24}^{6}=-1,C_{26}^{4}=1,C_{34}^{5}=1,C_{35}^{4}=-1,
$ \\
& $C_{j8}^{j}=1\;\left( 4\leq j\leq 6\right) ,C_{78}^{7}=p.$ \\
$L_{8,2}$ & $C_{23}^{1}=1,C_{12}^{3}=1,C_{13}^{2}=-1,\;C_{14}^{7}=\frac{1}{2}%
,C_{15}^{6}=\frac{1}{2},C_{16}^{5}=-\frac{1}{2},C_{17}^{4}=-\frac{1}{2}%
,C_{24}^{5}=\frac{1}{2},C_{25}^{4}=-\frac{1}{2},$ \\
& $C_{26}^{7}=\frac{1}{2},C_{27}^{6}=-\frac{1}{2},C_{34}^{6}=\frac{1}{2}%
,C_{35}^{7}=-\frac{1}{2},C_{36}^{4}=-\frac{1}{2},C_{37}^{5}=\frac{1}{2}%
,C_{45}^{8}=1,C_{67}^{8}=-1.$ \\
$L_{8,3}$ & $C_{23}^{1}=1,C_{12}^{3}=1,C_{13}^{2}=-1,\;C_{14}^{7}=\frac{1}{2}%
,C_{15}^{6}=\frac{1}{2},C_{16}^{5}=-\frac{1}{2},C_{17}^{4}=-\frac{1}{2}%
,C_{24}^{5}=\frac{1}{2},C_{25}^{4}=-\frac{1}{2},$ \\
& $C_{26}^{7}=\frac{1}{2},C_{27}^{6}=-\frac{1}{2},C_{34}^{6}=\frac{1}{2}%
,C_{35}^{7}=-\frac{1}{2},C_{36}^{4}=-\frac{1}{2},C_{37}^{5}=\frac{1}{2}%
,C_{48}^{4}=1,C_{58}^{5}=1,C_{68}^{6}=1,$ \\
& $C_{78}^{7}=1.$ \\
$L_{8,4}^{p}$ & $C_{23}^{1}=1,C_{12}^{3}=1,C_{13}^{2}=-1,\;C_{14}^{7}=\frac{1%
}{2},C_{15}^{6}=\frac{1}{2},C_{16}^{5}=-\frac{1}{2},C_{17}^{4}=-\frac{1}{2}%
,C_{24}^{5}=\frac{1}{2},C_{25}^{4}=-\frac{1}{2},$ \\
& $C_{26}^{7}=\frac{1}{2},C_{27}^{6}=-\frac{1}{2},C_{34}^{6}=\frac{1}{2},C_{35}^{7}=-\frac{1}{2}%
,C_{36}^{4}=-\frac{1}{2},C_{37}^{5}=\frac{1}{2}%
,C_{48}^{48}=p,C_{58}^{5}=p,C_{68}^{6}=p,$ \\
& $C_{78}^{7}=p,C_{48}^{6}=-1,C_{58}^{7}=-1,C_{68}^{4}=1,C_{78}^{5}=1.$ \\
$L_{8,5}$ & $C_{23}^{1}=1,C_{12}^{3}=1,C_{13}^{2}=-1,C_{14}^{7}=\frac{1}{2}%
,C_{15}^{6}=-\frac{1}{2},C_{16}^{5}=2,C_{16}^{8}=-1,C_{17}^{4}=-2,C_{18}^{6}=3,$ \\
& $C_{24}^{6}=\frac{1}{2},C_{25}^{7}=\frac{1}{2}%
,C_{26}^{4}=-2,C_{27}^{5}=-2,C_{27}^{8}=-1,C_{28}^{7}=3,C_{34}^{5}=2,C_{35}^{4}=-2,$ \\
& $C_{36}^{7}=1,C_{37}^{6}=-1.$ \\
$L_{8,6}$ & $%
C_{12}^{2}=2,C_{13}^{3}=-2,C_{23}^{1}=1,C_{14}^{4}=1,C_{15}^{5}=-1,C_{25}^{4}=1,C_{34}^{5}=1,C_{45}^{8}=1,
C_{67}^{8}=1.$ \\
$L_{8,7}^{p,q}$ & $%
C_{12}^{2}=2,C_{13}^{3}=-2,C_{23}^{1}=1,C_{14}^{4}=1,C_{15}^{5}=-1,C_{25}^{4}=1,C_{34}^{5}=1,C_{48}^{4}=1,
C_{58}^{5}=1,$ \\
$pq\neq 0$ & $C_{68}^{6}=p,C_{78}^{7}=q.$ \\
$L_{8,8}^{p}$ & $%
C_{12}^{2}=2,C_{13}^{3}=-2,C_{23}^{1}=1,C_{14}^{4}=1,C_{15}^{5}=-1,C_{25}^{4}=1,C_{34}^{5}=1,
C_{48}^{4}=1,C_{58}^{5}=1,$ \\
$p\neq 0$ & $C_{68}^{6}=p,C_{78}^{6}=1,C_{78}^{7}=p.$ \\
$L_{8.8}^{0}$ & $%
C_{12}^{2}=2,C_{13}^{3}=-2,C_{23}^{1}=1,C_{14}^{4}=1,C_{15}^{5}=-1,C_{25}^{4}=1,C_{34}^{5}=1,C_{48}^{4}=1,
C_{58}^{5}=1,$ \\
&  $C_{78}^{6}=1.$ \\
$L_{8,9}^{p,q}$ & $%
C_{12}^{2}=2,C_{13}^{3}=-2,C_{23}^{1}=1,C_{14}^{4}=1,C_{15}^{5}=-1,C_{25}^{4}=1,C_{34}^{5}=1,C_{48}^{4}=1,
C_{58}^{5}=1,$ \\
$q\neq 0$ & $C_{68}^{6}=p,C_{68}^{7}=-q,C_{78}^{6}=q,C_{78}^{7}=p.$ \\
$L_{8,10}^{p}$ & $%
C_{12}^{2}=2,C_{13}^{3}=-2,C_{23}^{1}=1,C_{14}^{4}=1,C_{15}^{5}=-1,C_{25}^{4}=1,C_{34}^{5}=1,C_{48}^{4}=1,
C_{58}^{5}=1,$ \\
&  $C_{68}^{6}=2,C_{78}^{7}=p,C_{45}^{6}=1.$\\
&  \\ \hline\br
\end{tabular}
\end{indented}
\end{table}

\begin{table}
\begin{indented}\item[]
\caption{Structure constants for indecomposable Lie algebras with
nontrivial Levi decomposition in dimension $n\leq 8$ after
\cite{Tu} (cont.).}
\begin{tabular}{@{}ll}
Algebra & Structure constants \\ \hline\mr
$L_{8,11}$ & $%
C_{12}^{2}=2,C_{13}^{3}=-2,C_{23}^{1}=1,C_{14}^{4}=1,C_{15}^{5}=-1,C_{25}^{4}=1,C_{34}^{5}=1,C_{48}^{4}=1,
C_{58}^{5}=1,$ \\
& $C_{68}^{6}=2,C_{78}^{6}=1,C_{78}^{7}=2,C_{45}^{6}=1.$ \\
$L_{8,12}^{p}$ & $%
C_{12}^{2}=2,C_{13}^{3}=-2,C_{23}^{1}=1,C_{14}^{4}=2,C_{16}^{6}=-2,C_{25}^{4}=2,C_{26}^{5}=1,C_{34}^{5}=1,
C_{35}^{6}=2,$ \\
& $C_{48}^{4}=1,C_{58}^{5}=1,C_{68}^{6}=1,C_{78}^{7}=p.$ \\
$L_{8,13}^{\varepsilon }$ & $%
C_{12}^{2}=2,C_{13}^{3}=-2,C_{23}^{1}=1,C_{14}^{4}=1,C_{15}^{5}=-1,C_{25}^{4}=1,C_{34}^{5}=1,C_{16}^{6}=1,
C_{17}^{7}=-1,$ \\
& $C_{27}^{6}=1,C_{36}^{7}=1,C_{45}^{8}=1,C_{67}^{8}=\varepsilon .$ \\
$L_{8,14}$ & $%
C_{12}^{2}=2,C_{13}^{3}=-2,C_{23}^{1}=1,C_{14}^{4}=1,C_{15}^{5}=-1,C_{25}^{4}=1,C_{34}^{5}=1,C_{16}^{6}=1,
C_{17}^{7}=-1,$ \\
& $C_{27}^{6}=1,C_{36}^{7}=1,C_{68}^{4}=1,C_{78}^{5}=1.$ \\
$L_{8,15}$ & $%
C_{12}^{2}=2,C_{13}^{3}=-2,C_{23}^{1}=1,C_{14}^{4}=1,C_{15}^{5}=-1,C_{25}^{4}=1,C_{34}^{5}=1,C_{16}^{6}=1,
C_{17}^{7}=-1,$ \\
& $C_{27}^{6}=1,C_{36}^{7}=1,C_{67}^{8}=1,C_{68}^{4}=1,C_{78}^{5}=1.$ \\
$L_{8,16}$ & $%
C_{12}^{2}=2,C_{13}^{3}=-2,C_{23}^{1}=1,C_{14}^{4}=1,C_{15}^{5}=-1,C_{25}^{4}=1,C_{34}^{5}=1,C_{16}^{6}=1,
C_{17}^{7}=-1,$ \\
&
$C_{27}^{6}=1,C_{36}^{7}=1,C_{48}^{4}=1,C_{58}^{5}=1,C_{68}^{4}=1,C_{68}^{6}=1,C_{78}^{5}=1,C_{78}^{7}=1.
$ \\
$L_{8,17}^{p}$ & $%
C_{12}^{2}=2,C_{13}^{3}=-2,C_{23}^{1}=1,C_{14}^{4}=1,C_{15}^{5}=-1,C_{25}^{4}=1,C_{34}^{5}=1,C_{16}^{6}=1,
C_{17}^{7}=-1,$ \\
$p\neq -1$ & $C_{27}^{6}=1,C_{36}^{7}=1,C_{48}^{4}=1,C_{58}^{5}=1,C_{68}^{6}=p,C_{78}^{7}=p.$ \\
$L_{8,17}^{-1}$ & $%
C_{12}^{2}=2,C_{13}^{3}=-2,C_{23}^{1}=1,C_{14}^{4}=1,C_{15}^{5}=-1,C_{25}^{4}=1,C_{34}^{5}=1,C_{16}^{6}=1,
C_{17}^{7}=-1,$ \\
& $C_{27}^{6}=1,C_{36}^{7}=1,C_{48}^{4}=1,C_{58}^{5}=1,C_{68}^{6}=-1,C_{78}^{7}=-1.$ \\
$L_{8,18}^{p}$ & $%
C_{12}^{2}=2,C_{13}^{3}=-2,C_{23}^{1}=1,C_{14}^{4}=1,C_{15}^{5}=-1,C_{25}^{4}=1,C_{34}^{5}=1,C_{16}^{6}=1,
C_{17}^{7}=-1,$ \\
$p\neq 0$ &
$C_{27}^{6}=1,C_{36}^{7}=1,C_{48}^{4}=p,C_{48}^{6}=-1,C_{58}^{5}=p,C_{58}^{7}=-1,C_{68}^{4}=1,C_{68}^{6}=p,
C_{78}^{5}=1$,\\& $C_{78}^{7}=p.$ \\
$L_{8,18}^{0}$ & $%
C_{12}^{2}=2,C_{13}^{3}=-2,C_{23}^{1}=1,C_{14}^{4}=1,C_{15}^{5}=-1,C_{25}^{4}=1,C_{34}^{5}=1,C_{16}^{6}=1,
C_{17}^{7}=-1,$ \\
& $C_{27}^{6}=1,C_{36}^{7}=1,C_{48}^{6}=-1,C_{58}^{7}=-1,C_{68}^{4}=1,C_{78}^{5}=1.$ \\
$L_{8,19}$ & $%
C_{12}^{2}=2,C_{13}^{3}=-2,C_{23}^{1}=1,C_{14}^{4}=3,C_{15}^{5}=1,C_{16}^{6}=-1,C_{17}^{7}=-3,C_{25}^{4}=3,
C_{26}^{5}=2,$ \\
& $C_{27}^{6}=1,C_{34}^{5}=1,C_{35}^{6}=2,C_{36}^{7}=3,C_{47}^{8}=1,C_{56}^{8}=-3.$ \\
$L_{8,20}$ & $%
C_{12}^{2}=2,C_{13}^{3}=-2,C_{23}^{1}=1,C_{14}^{4}=3,C_{15}^{5}=1,C_{16}^{6}=-1,C_{17}^{7}=-3,C_{25}^{4}=3,
C_{26}^{5}=2,$ \\
&
$C_{27}^{6}=1,C_{34}^{5}=1,C_{35}^{6}=2,C_{36}^{7}=3,C_{48}^{4}=1,C_{58}^{5}=1,C_{68}^{6}=1,C_{78}^{7}=1.
$ \\
$L_{8,21}$ & $%
C_{12}^{2}=2,C_{13}^{3}=-2,C_{23}^{1}=1,C_{14}^{4}=4,C_{15}^{5}=2,C_{17}^{7}=-2,C_{18}^{8}=-4,C_{25}^{4}=4,
C_{26}^{5}=3,$ \\
& $C_{27}^{6}=2,C_{28}^{7}=1,C_{34}^{5}=1,C_{35}^{6}=2,C_{36}^{7}=3,C_{37}^{8}=4.$ \\
$L_{8,22}$ & $%
C_{12}^{2}=2,C_{13}^{3}=-2,C_{23}^{1}=1,C_{14}^{4}=2,C_{16}^{6}=-2,C_{17}^{7}=1,C_{18}^{8}=-1,C_{25}^{4}=2,
C_{26}^{5}=1,$ \\
& $C_{28}^{7}=1,C_{34}^{5}=1,C_{35}^{6}=2,C_{37}^{8}=1.$ \\
\hline\br
\end{tabular}
\end{indented}
\end{table}

\end{document}